\definecolor{Gray}{gray}{0.7}
\begin{document}

\setlength{\pdfpageheight}{\paperheight}
\setlength{\pdfpagewidth}{\paperwidth}

\conferenceinfo{CSL-LICS 2014}{July 14--18, 2014, Vienna, Austria}
\copyrightyear{2014}
\copyrightdata{978-1-4503-2886-9}
\doi{2603088.2603092}

\title{Subclasses of Presburger Arithmetic and the\\ Weak EXP
  Hierarchy}

\authorinfo{Christoph Haase\thanks{The author is supported by the
    French Agence Nationale de la Recherche (ANR), \textsc{ReacHard}
    (grant ANR-11-BS02-001). }}  
           {Laboratoire Sp\'ecification et
             V\'erification (LSV), CNRS\\ \'Ecole Normale Sup\'erieure (ENS) de
             Cachan, France} 
           {haase@lsv.ens-cachan.fr}

\maketitle

\begin{abstract}
  It is shown that for any fixed $i>0$, the $\Sigma_{i+1}$-fragment of
  Presburger arithmetic, \ie, its restriction to $i+1$ quantifier
  alternations beginning with an existential quantifier, is complete
  for $\phsigma^{\EXP}_{i}$, the $i$-th level of the weak EXP
  hierarchy, an analogue to the polynomial-time hierarchy residing
  between $\NEXP$ and $\EXPSPACE$. This result completes the
  computational complexity landscape for Presburger arithmetic, a line
  of research which dates back to the seminal work by Fischer \& Rabin
  in 1974. Moreover, we apply some of the techniques developed in the
  proof of the lower bound in order to establish bounds on sets of
  naturals definable in the $\Sigma_1$-fragment of Presburger
  arithmetic: given a $\Sigma_1$-formula $\Phi(x)$, it is shown that
  the set of non-negative solutions is an ultimately periodic set
  whose period is at most doubly-exponential and that this bound is
  tight.
\end{abstract}

% \terms{Theory}

\category{F.4.1}{Mathematical logic}{Computational logic}

\keywords Presburger arithmetic, bounded quantifier alternation, weak
EXP hierarchy, ultimately periodic sets, context-free commutative
grammars

\section{Introduction}

\emph{Presburger arithmetic} is the first-order theory of the
structure ${\langle \mathbb{N},0,1,+,<\rangle}$. This theory was shown
to be decidable by Presburger in his seminal paper in 1929 by
providing a quantifier-elimination procedure~\cite{Pre29}. Presburger
arithmetic is central to a vast number of different areas in computer
science and is often employed as a tool for showing decidability and
complexity results.

The central decision problem for Presburger arithmetic is
\emph{validity}, \emph{i.e.}, to decide whether a given sentence is
true with respect to the standard interpretation in arithmetic. The
two most prominent ways to decide validity are either
quantifier-elimination based~\cite{Coo72} or automata
based~\cite{WB95,Kla08,DGH12}. Any decision procedure for Presburger
arithmetic is inherently tied to the computational complexity of
Presburger arithmetic; for that reason the complexity of Presburger
arithmetic has extensively been studied in the literature from the
1970's onwards. In order to fully capture the computational complexity
of Presburger arithmetic, Berman even introduced the $\STA$ measure on
the complexity of a decision problem, since Presburger arithmetic
\emph{``may not have precise complexity characterisations in terms of
  the usual time and tape measures''}~\cite{Ber80}. The class
$\STA(s(n),t(n),a(n))$ is the class of all problems of length $n$ that
can be decided by an alternating Turing machine in space $s(n)$ and
time $t(n)$ using $a(n)$ alternations, where ``$\ast$'' acts as a
wildcard in order to indicate an unbounded availability of a certain
resource. Based on the work by Fischer \& Rabin~\cite{FR74} and
Ferrante \& Rackoff~\cite{FR75}, Berman established the following
result.

\begin{proposition}[Berman~\cite{Ber80}]
  \label{thm:berman}
  Presburger arithmetic is complete for $\STA(\ast,2^{2^{n^{\bigo(1)}}},n)$.
\end{proposition}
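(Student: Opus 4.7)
The plan is to establish matching upper and lower bounds, each of which is essentially already present in the papers Berman builds on.

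For the upper bound, I would invoke the small-model bounds of Ferrante and Rackoff: any prenex Presburger sentence $Q_1 x_1 \cdots Q_n x_n\, \varphi(x_1,\dots,x_n)$ of length $n$ that is valid admits, at each existentially quantified variable (after Skolemisation against the preceding universal ones), a witness of bit-length at most $2^{n^{O(1)}}$. An alternating Turing machine can therefore decide validity by sweeping the quantifier prefix once, existentially guessing or universally branching over numbers of that doubly-exponential magnitude at the appropriate step. This uses exactly $n$ alternations, doubly-exponential time, and no bound on space, matching the upper $\STA$-parameters. The residual quantifier-free matrix involves arithmetic on doubly-exponentially large numbers, which can be evaluated deterministically within the same time bound and so is absorbed into the specification.

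For the lower bound, I would follow the compression technique of Fischer and Rabin. The key gadget is a family of formulas of size $O(k)$ that pins down the value $2^{2^k}$ (or, equivalently, expresses $y = x^{2^k}$), built by iteratively replacing the would-be quadratic blow-up in a multiplication ``$y = x \cdot x$'' with a short universal--existential equivalence that duplicates the occurrences of $x$ while adding only $O(1)$ quantifiers and symbols. Chaining these gadgets yields polynomial-size Presburger formulas that name numbers up to $2^{2^n}$. Given such succinct names, one can encode the computation of an $n$-alternation, doubly-exponential-time alternating Turing machine on an input of size $n$: configurations are represented by natural numbers below $2^{2^n}$, bits of a configuration are extracted by short arithmetic formulas, and the local consistency of successive configurations becomes a quantifier-free matrix. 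The alternation pattern of the simulated machine is then mirrored by a quantifier prefix of length $O(n)$ on top of that matrix, producing a Presburger sentence of polynomial size whose validity captures acceptance.

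The main obstacle is a careful accounting of alternations: the compression gadgets carry quantifiers of their own, and naive nesting would multiply the alternation count rather than keeping it at $O(n)$. The remedy, due in essence to Fischer and Rabin, is to hoist the auxiliary quantifiers introduced by the squaring gadgets into blocks of matching polarity in the outer prefix (or to introduce the large values once at the top and reuse them throughout the matrix), so that the $n$ alternations inherited from the simulated machine remain the dominant contribution. Verifying this alignment of polarities, together with the polynomial-size bookkeeping of the encoding of configurations and transitions, is the technical heart of the lower bound; modulo these details, completeness for $\STA(\ast,2^{2^{n^{O(1)}}},n)$ follows.
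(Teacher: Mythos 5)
The paper does not prove this proposition; it cites Berman~\cite{Ber80}, who established it in exactly the way you outline, with Ferrante \& Rackoff supplying the doubly-exponential witness bounds for the upper bound and the Fischer--Rabin multiplication-compression gadgets supplying the lower bound. Your reconstruction is therefore correct, with one point worth sharpening: the gadget quantifiers cannot simply be hoisted into matching-polarity blocks at no cost --- as the paper itself notes at the start of Section~\ref{sec:complexity}, the Fischer--Rabin trick genuinely increases the alternation count by a linear amount, which is acceptable under the $n$-alternation budget of $\STA(\ast,2^{2^{n^{\bigo(1)}}},n)$ but is exactly why that technique cannot be reused for the fixed-alternation fragments PA($i$) that are the subject of this paper.
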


In terms of the usual time and space measures, this settles Presburger
arithmetic between $\textsf{2}$-$\NEXP$ and
$\textsf{2}$-$\EXPSPACE$. Despite these high computational costs, on
the positive side when looking at fragments Presburger arithmetic
becomes more manageable. There are two dimensions in which we can
constraint formulas in order to obtain fragments of Presburger
arithmetic: the number of quantifier alternations and the number of
variables in each quantifier block. For $i,j\in \N\cup \{\ast\}$, let
PA($i$,$j$) denote the set of formulas of the $\Sigma_i$-fragment of
Presburger arithmetic\footnote{All results obtained are symmetric when
  considering $\Pi_i$-formulas.} such that at most $j$ different
variables occur in each quantifier block, where ``$\ast$'' is used as
a wildcard for an unbounded number. Hence,
Proposition~\ref{thm:berman} characterises the computational
complexity of PA($\ast$,$\ast$) with $n$ being the number of symbols
required to write down the formula. Subsequently, PA and PA($i$)
abbreviate PA($\ast$,$\ast$) and PA($i$,$\ast$), respectively.

One of the most prominent fragments of Presburger arithmetic is its
existential or quantifier-free fragment, which is computationally not
more expensive than standard Boolean satisfiability.
\begin{proposition}[Scarpellini~\cite{Sca84}, Borosh \& Treybing~\cite{BT76}]
  For any fixed $j\in \N$, {PA($1$,$j$)} is in
  $\P$~\cite{Sca84}. {PA($1$)} is $\NP$-complete~\cite{BT76}.
\end{proposition}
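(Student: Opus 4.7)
The plan is to prove the two claims separately.

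For the $\NP$-completeness of PA($1$), hardness follows by a direct reduction from $3$-SAT: replace each propositional variable $p_i$ by an integer variable $y_i$ constrained by $0 \le y_i \le 1$, and translate a clause $\ell_1 \vee \ell_2 \vee \ell_3$ into $t_1 + t_2 + t_3 \ge 1$, where $t_k$ is $y_{i_k}$ or $1 - y_{i_k}$ depending on whether $\ell_k$ is positive or negative. The resulting existentially-closed conjunction is valid iff the original instance is satisfiable. For membership, given $\exists \vec{x}\, \varphi(\vec{x})$ with $\varphi$ quantifier-free, I would nondeterministically guess a truth value for each atom of $\varphi$; this determines a conjunctive linear system $A\vec{x} \le \vec{b}$ whose feasibility implies validity of the formula. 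The Borosh--Treybing bound then guarantees that any such feasible system over $\N$ admits a witness whose binary size is polynomial in the size of $(A,\vec{b})$, so one additionally guesses such a witness and verifies it in polynomial time.

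For PA($1$,$j$) with $j$ fixed, the formula has the form $\exists x_1, \ldots, x_j\, \varphi$ where $\varphi$ is a Boolean combination of $m$ atomic linear (in)equalities $\lambda_1, \ldots, \lambda_m$. The key structural observation is that the hyperplanes $\{\lambda_i = 0\}$ induce an arrangement in $\mathbb{Q}^j$ consisting of at most $O(m^j)$ full-dimensional cells and lower-dimensional faces, a polynomial quantity once $j$ is fixed; the arrangement can be enumerated explicitly in polynomial time, and each cell is described by a sign pattern $\sigma \in \{<0,=0,>0\}^m$. On every cell the truth value of $\varphi$ is constant, so I would filter the cells by evaluating $\varphi$ on the sign pattern. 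For each surviving cell one then checks whether it contains an integer point, which is an integer-programming feasibility question in the fixed dimension $j$, decidable in polynomial time by Lenstra's algorithm.

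The central obstacle in both parts is to avoid the exponential blow-up that a naive conversion of $\varphi$ to a conjunctive system incurs. For PA($1$) it is sidestepped by nondeterministically selecting a single DNF disjunct and appealing to the polynomial witness bound of Borosh--Treybing. For PA($1$,$j$) it is sidestepped by the geometric observation that, although $\varphi$ has exponentially many abstract sign patterns, only $O(m^j)$ of them are realizable by a point in $\mathbb{Q}^j$; once combined with Lenstra's polynomial-time algorithm for fixed-dimension integer programming, this yields the claimed $\P$ upper bound.
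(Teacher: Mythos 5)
The paper does not prove this proposition; it cites it to Scarpellini~\cite{Sca84} and Borosh \& Treybing~\cite{BT76} without giving an argument, so there is no ``paper proof'' to compare against. Your reconstruction is essentially correct and is the standard modern route to both claims.

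Two small remarks. First, in the $\NP$-membership argument, ``this determines a conjunctive linear system $A\vec{x}\le\vec{b}$ whose feasibility implies validity'' is slightly too loose as written: you must also verify, in polynomial time, that the guessed truth assignment to the atoms actually satisfies the Boolean structure of $\varphi$ before appealing to feasibility plus the Borosh--Treybing witness bound (and one should note that over $\N$ with integer coefficients, $p(\vec{x})<b$ can be replaced by $p(\vec{x})\le b-1$ so strict inequalities cause no trouble). Second, for PA($1$,$j$) your argument via the $O(m^j)$ faces of the hyperplane arrangement together with Lenstra's fixed-dimension integer-programming algorithm is sound and clean; you should simply be aware that it is not necessarily the argument Scarpellini gave (Lenstra's algorithm is essentially contemporaneous with~\cite{Sca84}), so you are reproving the statement rather than reproducing the cited proof, which is perfectly acceptable here but worth acknowledging. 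The key structural point you identify---that only polynomially many sign patterns over the atoms are realisable once $j$ is fixed, sidestepping the exponential DNF blow-up---is exactly the right observation.
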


Due to its comparably low computational complexity, quantifier-free
Presburger arithmetic is the fragment that is most commonly found in
application areas which aim at a practical impact. The existential
fragment of Presburger arithmetic can even be extended with a full
divisibility predicate while retaining
decidability~\cite{Lip78,Lip81}.

%% , since most modern
%% SMT solvers such as Z3 or implement efficient decision procedures for
%% this fragment~\cite{}.
Another subclass of Presburger arithmetic which
has extensively been studied is obtained by allowing for an arbitrary
but fixed number of quantifier alternations.
\begin{proposition}[Gr\"adel~\cite{Gra88}, Sch\"oning~\cite{Scho97}, Reddy \& Loveland~\cite{RL78}]
  For any fixed $i>0$ and $j>2$, PA($i+1$,$j$) is
  $\phsigma_{i}^\P$-complete\footnote{In order to establish hardness,
    $j>2$ is only required on the innermost
    quantifier.}~\cite{Gra88,Scho97}. PA($i$) is in
  $\STA(\ast,2^{n^{\bigo(i)}},i)$~\cite{RL78}.
\end{proposition}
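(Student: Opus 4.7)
The plan has three parts corresponding to the two upper bounds and the single lower bound. I would begin with the unbounded-variables upper bound on PA($i$), then establish the polynomial-hierarchy upper bound on PA($i+1$,$j$), and finally reduce from $\Sigma_{i+1}$-QBF to obtain $\phsigma_{i}^{\P}$-hardness.

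For PA($i$), I would use the Ferrante--Rackoff small-model analysis~\cite{FR75}: if a prenex Presburger sentence with $i$ quantifier blocks and input length $n$ is valid, then the existential witnesses can be chosen from an interval of magnitude at most $2^{n^{\bigo(i)}}$, where the $\bigo(i)$ factor reflects that each new alternation iterates the elementary bound once. An alternating Turing machine with $i$ alternations guesses such a witness vector (each coordinate of bit-length $n^{\bigo(i)}$) at each alternation and evaluates the quantifier-free matrix in time polynomial in those bit-lengths, for a total time of $2^{n^{\bigo(i)}}$ and exactly $i$ alternations.

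For PA($i+1$,$j$) with $j$ fixed, the witness bounds sharpen dramatically. The base case is Borosh--Treybig~\cite{BT76}: a satisfiable system of linear Diophantine inequalities over a constant number of variables has a solution of polynomial bit-length. Lifting this to arbitrary alternation depth by induction, following Gr\"adel~\cite{Gra88}, each existential block still admits polynomial-size witnesses because only $j$ coordinates per block need to be named. A $\phsigma_{i}^{\P}$-machine can therefore guess an existential witness in polynomial time and delegate the remaining $\Pi_i$-prefix to a $\Pi_{i-1}^{\P}$-oracle, placing validity in $\phsigma_{i}^{\P}$.

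For $\phsigma_{i}^{\P}$-hardness of PA($i+1$,$j$), I would reduce from $\Sigma_{i+1}$-QBF, as in Stockmeyer--Meyer style arguments. The naive idea is to replace each Boolean variable by a Presburger variable constrained to $\{0,1\}$, translating Boolean connectives directly. Because only $j$ Presburger variables are allowed per quantifier block, one cannot afford a separate variable per propositional variable; instead, each QBF block is encoded by a single integer variable whose binary expansion packs all the Booleans of that block, and bit extraction is expressed by an auxiliary variable together with divisibility-style Presburger predicates. This is exactly the point where $j>2$ is needed on the innermost block: the bit-extraction witness and the clause-by-clause verification of the matrix must coexist there, whereas the outer blocks need only the packed integer itself.

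The principal obstacle is the hardness reduction under the strict variable-per-block cap: the bit-packing must simultaneously (i) respect the prescribed quantifier alternation pattern of the target QBF, using one packed integer per QBF block, and (ii) permit bit extraction inside Presburger predicates without exceeding the $j$-variable budget of any block. The two upper bounds, by contrast, are comparatively routine applications of the Ferrante--Rackoff and Borosh--Treybig small-model properties to an alternating Turing machine and a polynomial-hierarchy machine, respectively.
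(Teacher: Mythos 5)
This proposition is quoted in the paper from Gr\"adel, Sch\"oning, and Reddy--Loveland; the paper itself supplies no proof, so your attempt has to be judged on its own terms. Two of your three parts are essentially right in spirit, but the hardness part contains a real error and the PA($i+1$,$j$) upper bound elides the one step that actually makes the levels come out right.

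The hardness source is wrong. You propose to reduce from $\Sigma_{i+1}$-QBF with an alternation-preserving bit-packing, sending one packed integer per QBF block. But $\Sigma_{i+1}$-QBF is complete for $\phsigma_{i+1}^{\P}$, not for $\phsigma_{i}^{\P}$; a reduction of that form would show PA($i+1$,$j$) is $\phsigma_{i+1}^{\P}$-hard, which together with the upper bound would force $\phsigma_{i+1}^{\P}\subseteq\phsigma_{i}^{\P}$, \emph{i.e.}, a collapse of $\PH$. The correct source is $\Sigma_{i}$-QBF, with $i$ quantifier blocks. The target PA formula has $i+1$ blocks because an \emph{extra} innermost existential block is introduced purely to house the auxiliary variables needed for bit extraction and for evaluating the propositional matrix; it does not mirror any QBF block. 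This is also what the footnote is telling you --- $j>2$ is needed only on that added innermost block, while each outer block merely carries the single packed integer.

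The $\phsigma_{i}^{\P}$ upper bound needs the innermost block to be free. You write that after guessing a polynomial-length witness for the first block one can ``delegate the remaining $\Pi_i$-prefix to a $\Pi_{i-1}^{\P}$-oracle,'' but a $\Pi_i$-prefix has $i$ alternating blocks and naively costs $i$ levels, which would put you at $\phsigma_{i+1}^{\P}$. What saves one level is that the final block $Q_{i+1}\vec{x}_{i+1}.\varphi$ has a \emph{fixed} number of variables, so after substituting the outer witnesses it is a system of linear inequalities in fixed dimension and is decidable in deterministic polynomial time (Scarpellini, or Lenstra-type / Frank--Tardos integer programming in fixed dimension). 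Thus the $i$-th oracle call only needs $\P$-power for its base case, and the count comes out to $\phsigma_{i}^{\P}$. Without making this explicit your level accounting does not close. Relatedly, Borosh--Treybig bounds witnesses for purely existential systems; for the quantified case you need a bound of Weispfenning--Reddy--Loveland type (the paper's Proposition on solution intervals), which for fixed $i,j$ gives witnesses of bit-length $\abs{\Phi}^{(3j)^{i}}$, \emph{i.e.}, polynomial --- this is the statement you should invoke rather than ``lifting Borosh--Treybig by induction.''

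The PA($i$) $\in \STA(\ast,2^{n^{\bigo(i)}},i)$ part via Ferrante--Rackoff small witnesses and an alternating machine is fine and matches the Reddy--Loveland route.
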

Thus, when fixing the number of quantifier alternations, the
complexity of Presburger arithmetic decreases roughly by one exponent,
and when additionally fixing the number of variables, we obtain every
level of the polynomial-time hierarchy. Notice that there is an
obvious gap: a completeness result for Presburger arithmetic with a
fixed number of quantifier alternations and an arbitrary number of
variables in each quantifier block is missing.

The study of lower bounds for PA($i$) goes back to the work of
F\"urer~\cite{F82}, who showed a $\NEXP$ lower bound for some fixed
$i>1$. Later, Gr\"adel showed $\NEXP$-hardness and $\EXPSPACE$
membership of PA(2), but tight lower and upper bounds for the whole
class of PA($i$) formulas have not yet been established. The purpose
of the first part of this paper is to close this gap and establish the
following theorem.
\begin{theorem}
  \label{thm:main}
  For any fixed $i>0$, the $\Sigma_{i+1}$-fragment of Presburger
  arithmetic is $\phsigma^{\EXP}_{i}$-complete.
\end{theorem}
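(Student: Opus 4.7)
The plan is to establish matching upper and lower bounds. For the upper bound, \emph{membership} in $\phsigma_i^{\EXP}$, I would refine the Reddy--Loveland alternating decision procedure. Given a $\Sigma_{i+1}$-PA sentence $\exists \vec x_1 \forall \vec x_2 \cdots Q\vec x_{i+1}.\,\varphi$, I would first invoke small-model bounds in the style of Weispfenning and Gr\"adel to show that if the sentence holds then there exist witnesses for the outermost block $\vec x_1$ whose binary representations have polynomial size (so numerical values at most $2^{n^{O(1)}}$). An alternating Turing machine running in time $2^{n^{O(1)}}$ then nondeterministically guesses these witnesses, substitutes them in, and recursively decides the resulting $\Pi_i$-PA sentence---whose constants still have polynomial bit-length---in exponential time with $i-1$ further alternations. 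This consumes $i$ alternations starting with $\exists$ in single-exponential time, which is exactly the ATM characterization of $\phsigma_i^{\EXP}$. The base case $i=1$ collapses to $\NEXP$ and recovers Gr\"adel's membership for PA(2).

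For the lower bound, $\phsigma_i^{\EXP}$-hardness---which is the main technical contribution---I would reduce from a canonical $\phsigma_i^{\EXP}$-complete problem, for instance the truth of a succinctly presented $\Sigma_i$-QBF over $2^{n^{O(1)}}$ Boolean variables per block, to the truth of a $\Sigma_{i+1}$-PA sentence of polynomial size. The idea is to encode each QBF block of exponentially many Booleans as the bits of a single Presburger variable of doubly-exponential value, so that one PA quantifier block handles one QBF block. The quantifier-free matrix of the QBF then becomes a conjunction of bit-level predicates of the form ``the $k$-th bit of $y$ equals the $k$-th bit of $z$''. To express such predicates compactly, I would invoke the Fischer--Rabin iterated-squaring construction for $k \mapsto 2^k$, which is realizable in PA using auxiliary universal quantifiers over variables of polynomial value.

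The main obstacle is to perform this encoding without inflating the quantifier alternation beyond $i+1$. The bit-manipulation gadgets introduce auxiliary quantifiers, and these must be absorbed into the existing $i+1$ blocks by a careful polarity analysis together with Skolemization-style rewrites such as $\forall y\,\exists z\,.\,\psi \equiv \forall y\,.\,\psi[z/t(y)]$ whenever the witness $z$ is definable from $y$ by a short Presburger term. Managing this absorption uniformly across all $i+1$ QBF blocks, and simultaneously verifying that the reduction itself runs in polynomial time, is where I expect the bulk of the technical work to lie; the case $i=1$ already exhibits the key difficulty in Gr\"adel's proof, and the general case will amount to iterating that construction in an alternation-preserving fashion.
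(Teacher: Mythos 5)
Your proposal has genuine gaps in both directions.

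\textbf{Upper bound.} You claim that the small-model bound yields witnesses for the outer blocks of \emph{polynomial} bit-length (values at most $2^{n^{O(1)}}$). This is the bound one gets for PA($i$,$j$) with $j$ \emph{fixed}, but here the number of variables per block is unbounded, and Weispfenning's bound gives witness magnitudes $2^{c\abs{\Phi}^{(3j)^{i+1}}}$ with $j$ up to $\abs{\Phi}$, hence \emph{exponential} bit-length $2^{\poly(\abs{\Phi})}$. Once you have substituted such valuations into the matrix, the residual $\Sigma_1$ (or $\Pi_1$) sentence has exponential size, and your plan of deciding it ``deterministically in exponential time'' by brute force would actually cost doubly-exponential time in $\abs{\Phi}$, blowing the budget; you would only establish membership in $\phsigma_{i+1}^{\EXP}$, not $\phsigma_{i}^{\EXP}$. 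The paper's way around this is precisely the point you omit: the innermost block is handed to the base oracle together with the exponentially long encodings of $\vec{a}_1,\ldots,\vec{a}_i$, so the oracle's input is itself of size $2^{\poly(\abs{\Phi})}$; one then converts the matrix to DNF ($2^{O(\abs{\Phi})}$ clauses, polynomial in the oracle input) and, crucially, invokes the Frank--Tardos strongly polynomial algorithm for integer programming in fixed dimension (Proposition~\ref{prop:frank-tardos-ip}), so that each system with $j\le\abs{\Phi}$ variables and exponentially long constants is decided in time $j^{2.5j+o(j)}\abs{S}\in 2^{\poly(\abs{\Phi})}$, i.e., polynomial in the oracle input. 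Without this fixed-dimension argument the last quantifier block cannot be absorbed.

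\textbf{Lower bound.} You propose to implement bit-extraction with the Fischer--Rabin iterated-squaring construction. The paper explicitly rules this out: that trick ``linearly increases the number of quantifier alternations'', which is exactly what cannot be afforded when the target is a fixed $\Sigma_{i+1}$ prefix. Your suggested remedy---Skolemising $\forall y\,\exists z.\psi$ into $\forall y.\psi[z/t(y)]$---requires the existential witness to be given by a Presburger term, but Presburger terms are linear, while the witnesses arising in the Fischer--Rabin recursion are non-linear functions of the bound variables (squares, products), so the rewrite does not apply. The actual construction replaces this machinery by Gr\"adel's Chinese-remainder encoding: a bit string of exponential length is encoded by a single natural number via its residues modulo the primes in the cubic intervals $[k^3,(k+1)^3)$, whose non-emptiness is guaranteed by Ingham's theorem. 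Validity of such an encoding and extraction of a designated bit are then expressed with a single block of $\forall$-quantifiers over polynomially bounded variables, and primality testing is implemented by evaluating (inside PA, via $\Sigma_1/\Pi_1$ ``neutral'' formulas) a polynomial-size Boolean circuit obtained from the AKS algorithm. The succinctly encoded circuit simulating a polynomial-time verifier on exponentially long inputs is then checked against this encoding, yielding the $\phsigma_i^{\EXP}$-hardness reduction with exactly $i+1$ quantifier blocks. None of this appears in your sketch, and the specific obstacle you identify is the one that the paper resolves by an entirely different encoding rather than by alternation-absorbing rewrites.
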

Here, $\phsigma^{\EXP}_i$ denotes the $i$-th level of the \emph{weak
  EXP hierarchy}~\cite{Hem89}, an analogue to the polynomial-time
hierarchy~\cite{Sto76} residing between $\NEXP$ and $\EXPSPACE$; a
formal definition will be provided in
Section~\ref{sec:weth}. Equivalently, we obtain that PA($i+1$) is
complete for $\STA(\ast,2^{n^{\bigo(i)}},i)$. Determining the precise
complexity of the $\Sigma_i$-fragment of Presburger arithmetic for a
fixed $i$ has been listed as a problem that \emph{``deserves to be
  investigated''} by Compton \&
Henson~\cite[Prob.\ 10.14]{CH90}. However, as pointed out
in~\cite{CH90}, their generic methods for proving lower bounds do not
seem to be applicable to this fragment, and our hardness result is
based on rather specific properties of, for instance, distributions of
prime numbers.

The second part of the paper diverts from the first part and focuses
on the $\Sigma_1$-fragment of Presburger arithmetic. More
specifically, we consider sets of naturals definable by formulas in
the $\Sigma_1$-fragment of Presburger arithmetic open in one
variable. Given a $\Sigma_1$-formula $\Phi(x)$, denote by
$\eval{\Phi(x)}$ the set of those $a\in \mathbb{N}$ such that
replacing $x$ with $a$ in $\Phi(x)$ is valid. It is well-known that
$\eval{\Phi(x)}$ is an ultimately periodic set, see
\emph{e.g.}~\cite{Bes02}. A set $N\subseteq \mathbb{N}$ is
\emph{ultimately periodic} if there exists a \emph{threshold}
$t\in\mathbb{N}$, a \emph{base} $B\subseteq \{0,\ldots t-1\}$, a
\emph{period $p\in\mathbb{N}$}, and a set of \emph{residue classes}
$R\subseteq \{0,\ldots p-1\}$ such that $N=U(t,p,B,R)$ with
\begin{align*}
  U(t,p,B,R) & \defeq B \cup \left\{ t + r + kp : r\in R, k\ge 0 \right\}.
\end{align*}

Given a $\Sigma_1$-formula $\Phi(x)$, by applying some insights from
the first part, we can establish a doubly-exponential upper bound on
the period of the ultimately periodic set equivalent to
$\eval{\Phi(x)}$ and show that this bound is tight, which is captured
by the second main theorem of this paper.
\begin{theorem}
  \label{thm:qfpa-projections}
  There exists a family of $\Sigma_1$-formulas of Presburger
  arithmetic $(\Phi_n(x))_{n>0}$ such that each $\Phi_n(x)$ is a
  {PA($1$,$O(n)$)} formula with $\abs{\Phi_n(x)}\in O(n^2)$ and
  $\eval{\Phi_n(x)}$ is an ultimately periodic set with period $p_n\in
  2^{2^{\Omega(n)}}$. Moreover for any $\Sigma_1$-formula $\Phi(x)$,
  we have $\eval{\Phi(x)}=U(t,p,B,R)$ such that $t\in
  2^{\poly(\abs{\Phi(x)})}$ and $p\in 2^{2^{\poly(\abs{\Phi(x)})}}$.
\end{theorem}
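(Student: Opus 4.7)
Given $\Phi(x) = \exists \vec{y}.\phi(x,\vec{y})$ with $\phi$ quantifier-free, I would first put $\phi$ in disjunctive normal form, so that $\eval{\Phi(x)}$ becomes a finite union of projections to $x$ of linear Diophantine systems. By classical small-solution bounds for integer programming (Pottier, Borosh--Treybig, von zur Gathen--Sieveking), each such system's solution set in $\mathbb{N}^{1+|\vec{y}|}$ is a linear set $\{\vec{b}+\sum_j \lambda_j \vec{p}_j:\lambda_j\in\mathbb{N}\}$ with bases and periods of magnitude at most $2^{\poly(|\Phi|)}$, and the DNF contains at most $2^{\poly(|\Phi|)}$ disjuncts. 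Projecting a single linear set onto the $x$-coordinate yields $b_x+\{\sum_j \lambda_j (p_j)_x:\lambda_j\in\mathbb{N}\}$, a shifted numerical semigroup which is ultimately periodic with period $\gcd\{(p_j)_x\}$ and threshold bounded by a Frobenius-type estimate, both at most $2^{\poly(|\Phi|)}$. Taking the union over disjuncts, the threshold becomes the maximum individual threshold, still singly exponential, while the period divides the lcm of the per-disjunct periods. Since the lcm of up to $2^{\poly(|\Phi|)}$ numbers each at most $2^{\poly(|\Phi|)}$ is at most their product $(2^{\poly(|\Phi|)})^{2^{\poly(|\Phi|)}} = 2^{2^{\poly(|\Phi|)}}$, the claimed upper bound follows.

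\textbf{Lower bound.} For the family $\Phi_n(x)$, I would adapt the succinct encoding used in the lower-bound proof of Theorem~\ref{thm:main} to the $\Sigma_1$ setting. The aim is to construct a formula of size $O(n^2)$ whose semilinear set, when expanded into DNF, comprises $2^{\Omega(n)}$ disjuncts whose projected periods are pairwise coprime distinct small primes, so that the lcm witnesses the claimed doubly-exponential lower bound. Concretely, I would introduce $O(n)$ existentially quantified ``selector'' variables constrained to $\{0,1\}$, whose joint assignments enumerate $2^n$ disjuncts implicitly through Boolean connectives, and use auxiliary variables $y_1,\dots,y_n$ together with chained linear equations so that each selector assignment forces $x$ into a residue class modulo a distinct prime drawn from the first $2^{\Omega(n)}$ primes. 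By the Prime Number Theorem, the product of these primes is $e^{\Theta(2^n)}=2^{2^{\Omega(n)}}$, matching the upper bound up to the polynomial in the exponent. One then verifies that the least period of the union is indeed this product rather than some smaller divisor, by exhibiting, for each prime $q$ in the family, a point $x_0$ witnessing $x_0\in\eval{\Phi_n}$ and $x_0+P/q\notin\eval{\Phi_n}$.

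\textbf{Main obstacle.} The principal difficulty is bridging the gap between the succinct $\Sigma_1$-formula and the exponentially many distinct periodicities its DNF must encode: divisibility constraints ``$q\mid x$'' with a quantified divisor $q$ are not Presburger-definable (they would require multiplying two variables), and listing $2^{\Omega(n)}$ primes explicitly blows up the formula size beyond polynomial. The construction must therefore exploit the combinatorial explosion of DNF expansion, coupling each Boolean assignment of the $O(n)$ selector variables to a distinct prime via the same chained-arithmetic and CRT-style tricks that underpin the hardness proof of Theorem~\ref{thm:main}; the number-theoretic ingredient is the same (distribution of primes) used there. The delicate step is ensuring that DNF disjuncts indexed by different Boolean assignments contribute \emph{genuinely distinct} prime periods to the projected semilinear set, rather than being collapsed by common factors into a singly-exponential lcm.
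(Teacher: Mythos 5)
Your upper bound argument is correct and follows essentially the same route as the paper: DNF expansion, Pottier's bound on linear-set representations of Diophantine systems, projection to the $x$-coordinate with a Frobenius-number estimate for the threshold, and an lcm bound over the singly-exponentially-many disjuncts (the paper invokes Nair's estimate on $\lcm\{1,\ldots,M\}$ in place of your direct product bound, but both give $2^{2^{\poly}}$).

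Your lower bound, however, is built around a misconception that the paper's construction directly refutes. You write that ``divisibility constraints $q\mid x$ with a quantified divisor $q$ are not Presburger-definable,'' and you therefore try to smuggle in $2^{\Omega(n)}$ pairwise-coprime periods by coupling selector bits to distinct primes. But \emph{bounded} divisibility $y\mid x$ with $y<2^n$ \emph{is} $\Sigma_1$-definable in size $O(n^2)$: this is exactly the formula $\Phi_{\mathit{mod},n}(x,y)$ from equation~(\ref{eqn:qfpa-mod}), which expresses $x\equiv 0\bmod y$ by existentially quantifying the quotient $k$, decomposing $y$ into its binary digits $\vec{y}$, and writing $x=\sum_{i\in[n]}2^i x_i$ with $x_i = k\,y_i$ — a bounded-multiplication trick that avoids multiplying two unbounded variables. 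Once you have $\Phi_{\mathit{mod},n}$, the lower bound is almost immediate: set $\Phi_n(x)\defeq\exists y.\,\Phi_{\mathit{mod},n}(x,y)\wedge y>1$, so that $\eval{\Phi_n(x)}$ is the set of positive integers divisible by some $m\in\{2,\ldots,2^n-1\}$, whose period is $\lcm\{1,\ldots,2^n-1\}\in 2^{2^{\Omega(n)}}$ by Nair's bound. No primes, no selector-variable gadgetry, and no delicate argument about distinct periods surviving DNF expansion are needed — the coprimality bookkeeping you worry about is exactly what makes your alternative route problematic, since there is no Presburger-definable way to ``compute the $k$-th prime'' from $O(n)$ selector bits, so your construction as sketched would not go through. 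The fix is simply to reuse $\Phi_{\mathit{mod},n}$ from the hardness proof of Theorem~\ref{thm:main}.

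Two smaller remarks. First, in the upper bound you write that each system's solution set ``is a linear set''; Pottier's result gives a finite \emph{union} of linear sets, which matters only notationally but is worth stating correctly. Second, note that the size bound $|\Phi_n(x)|\in O(n^2)$ (rather than $O(n)$) already accounts for the binary encoding of the constants $2^i$ appearing in $\Phi_{\mathit{bin},n}$, so the paper's construction meets the stated quantitative constraints without any further tuning.
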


The most interesting part about this theorem is the doubly-exponential
lower bound of the period of ultimately periodic sets definable by
PA($1$) formulas. Establishing bounds on constants of ultimately
periodic sets naturally occurs when analysing the computational
complexity of decision problems for infinite-state
systems~\cite{GHOW12} or in formal language theory~\cite{Huy84}. For
instance, analysing such bounds has been crucial in order to obtain
optimal complexity results for model-checking problems of a class of
one-counter automata in~\cite{GHOW12}. In more detail,
in~\cite{GHOW12} it has been shown that the set of non-negative
weights of paths between two nodes in a weighted graph is ultimately
periodic with a period that is at most \emph{singly-exponential}
bounded. A result by Seidl \emph{et al.}~\cite{SSMH04} on Parikh
images of non-deterministic finite-state automata implicitly states
that those ultimately periodic sets are definable in the
$\Sigma_1$-fragment of Presburger arithmetic. It would thus be
desirable to establish a generic upper bound for ultimately periodic
sets definable in PA($1$) yielding the same optimal bounds. In this
context, Theorem~\ref{thm:qfpa-projections} provides a negative result
in that it shows that a general bound on ultimately periodic sets
definable in PA($1$) cannot yield the optimal bounds required for
natural concrete ultimately periodic sets like those considered
in~\cite{GHOW12}.

This paper is structured as follows. In
Section~\ref{sec:preliminaries} we provide most of the formal
definitions required in this paper; however the reader is expected to
have some level of familiarity with standard notions and concepts from
linear algebra, integer programming, first-order logic and
computational complexity. Even though we provide a slightly more
elaborated account on succinct encodings via Boolean circuits, it will
be beneficial to the reader to be familiar with Chapters~8 and 20 in
Papadimitriou's book on computational
complexity~\cite{Pap94}. Section~\ref{sec:complexity} is then going to
establish the lower and upper bounds of Theorem~\ref{thm:main}, and
Theorem~\ref{thm:qfpa-projections} is shown in
Section~\ref{sec:derived-results}. The paper concludes in
Section~\ref{sec:conclusion}. Subsequent to the bibliography, a proof
of a technical characterisation of the weak EXP hierarchy is outlined
in the appendix for the sake of completeness.

\section{Preliminaries}
\label{sec:preliminaries}

\subsection{General notation} By $\mathbb{Z}$ and $\N$ we denote the set
of integers and natural numbers, respectively. We will usually use
$a,b,c$ for numbers in $\Z$ and $\N$. Given $a\in \N$, we define
$[a]\defeq \{0,\ldots a-1\}$. Given sets $M,N\subseteq \N$, as is
standard $M+N\defeq \{ m + n : m\in M, n\in N\}$ and $M\cdot N\defeq
\{ mn : m\in M, n\in N\}$. Moreover, we will use standard notation for
integer intervals and, \emph{e.g.}, for $a\le b\in \N$ denote by
$[a,b)$ the set $\{a,\ldots b-1 \}$. For vectors
  $\vec{a}=(a_1,\ldots,a_n)\in \Z^n$, we will denote by
  $\norm{\vec{a}}$ the \emph{norm of $\vec{a}$}, which is the maximum
  absolute value of all components of $\vec{a}$, \emph{i.e.},
  $\norm{\vec{a}}\defeq \max\{\abs{a_i}\}_{1\le i\le n}$.  For
  $m\times n$ integer matrices $A$, $\norm{A}$ denotes the maximum
  absolute value of all components of $A$. Finally, given a set
  $M\subseteq \Z^n$, we denote by $\norm{M}$ the maximum of the norm
  of all elements of $M$. All functions in this paper are assumed to
  map non-negative integers to non-negative integers.  Unless stated
  otherwise, we assume all integers in this paper to be encoded in
  binary, \emph{i.e.}, the size or length to write down $a\in \Z$ is
  $\bigo(\log \abs{a})$.

\subsection{Presburger Arithmetic} Usually, $x,y,z$ will denote
first-order variables, and $\vec{x},\vec{y},\vec{z}$ vectors or tuples
of first-order variables. Let $\vec{x}=(x_1,\ldots,x_n)$ be an
$n$-tuple of first-order variables. In this paper, formulas of
Presburger arithmetic are standard first-order formulas over the
structure ${\langle \mathbb{N},0,1,+,<\rangle}$ obtained from atomic
expressions of the form $p(\vec{x})< b$, where $p(\vec{x})$ is a
linear multivariate polynomial with integer coefficients and absolute
term zero, and $b\in \Z$. If the dimension of $\vec{x}$ is clear from
the context, for brevity we will often omit stating it explicitly.
Let $\vec{a}=(a_1,\ldots,a_n)\in \N^n$ and $\Phi(\vec{x})$ be open in
the first-order variables $\vec{x}$, we denote by
$\Phi(\vec{a}/\vec{x})$ the closed formula obtained from replacing
every occurrence of $x_i$ in $\Phi(\vec{x})$ with $a_i$. By
$\eval{\Phi(\vec{x})}$ we denote the set $\{ \vec{a}\in \N^n:
\Phi(\vec{a}/\vec{x})$ is valid$\}$. The \emph{size} $\abs{\Phi}$ of a
formula of Presburger arithmetic is defined as the number of symbols
required to write it down, and the \emph{norm} $\norm{\Phi}$ is the
largest absolute value of all constants occurring in $\Phi$. 
%% As stated
%% in the introduction, for $i,j\in \N\cup \{\ast\}$, PA($i$,$j$) denotes
%% the set of formulas of Presburger arithmetic restricted to $i$
%% quantifier alternations and at most $j$ different variables in each
%% quantifier block.

\begin{remark}
  For notational convenience, when stating concrete formulas we will
  permit ourselves to use atomic formulas $p(\vec{x})< q(\vec{x})$ for
  linear polynomials $p(\vec{x}), q(\vec{x})$. Moreover, all results
  on the complexity of validity of formulas of Presburger arithmetic
  carry over if we assume unary encoding of numbers, since binary
  encoding of numbers can be ``simulated'' by the introduction of
  additional first-order variables and repeated multiplication by two,
  causing only a sub-quadratic blowup in the formula size. In
  addition, an equality predicate ``$=$'' can be expressed in terms of
  $<$ causing a linear blowup, since $x=y \leftrightarrow x<y+1 \wedge
  y<x+1$. Likewise, $x>y$ and $x<y<z$ abbreviate $y<x$ and $x<y \wedge
  y<z$, respectively.
\end{remark}

\subsection{Semi-Linear Sets and Systems of Linear Diophantine Inequalities}
A central result due to Ginsburg and Spanier states that the sets of
natural numbers definable by a formula of Presburger arithmetic open
in $n$ variables are the \emph{$n$-dimensional semi-linear
  sets}~\cite{Gin66}, which we just call semi-linear sets if the
dimension is clear from the context. A semi-linear set is a finite
union of \emph{linear sets}. The latter are defined in terms of a
\emph{base vector} $\vec{b}\in \mathbb{N}^n$ and a finite set of
\emph{period vectors} $P=\{\vec{p}_1,\ldots \vec{p}_k\}\subseteq
\mathbb{N}^n$, and define the set
\begin{align*}
  L(\vec{b};P)\defeq \vec{b} + \lambda_1\vec{p}_1 + \cdots \lambda_k \vec{p}_k, 
  ~\lambda_i\in \N, 1\le i\le k.
\end{align*}

Let $A$ be an $m\times n$ integer matrix and $\vec{c}\in \Z^m$. A
\emph{system of linear Diophantine inequalities} is given as $S :
A\vec{x} \ge \vec{c}$. The \emph{size $\abs{S}$ of $S$} is the number
of symbols required to write down $S$ assuming binary encoding of
numbers. The \emph{set of positive solutions of $S$} is denoted by
$\eval{S}\subseteq \N^n$ and is the set of all $n$-tuples such that
the inequalities in every row of $S$ hold.

The following proposition is due to Frank \& Tardos and establishes a
strongly polynomial-time algorithm for the feasibility problem of a
system of linear Diophantine inequalities in a fixed dimension, \ie,
deciding whether $\eval{S}\neq \emptyset$.
\begin{proposition}[Frank \& Tardos~\cite{FT87}]
  \label{prop:frank-tardos-ip}
  Let $S:A\vec{x}\ge \vec{c}$ be a system of linear Diophantine
  inequalities such that $A$ is an $m\times n$ matrix. Then
  feasibility of $S$ can be decided using $n^{2.5n+o(n)}\abs{S}$
  arithmetic operations and space polynomial in $\abs{S}$.
\end{proposition}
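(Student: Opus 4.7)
The plan is to combine Lenstra's algorithm for integer programming in fixed dimension with a strongly polynomial preprocessing that compresses the bit-length of the coefficients of $S$ to a function of $n$ alone. The algorithm itself then runs on a system whose total size is bounded in terms of $n$, and the $\abs{S}$ factor in the final count comes only from the one-time preprocessing pass over the $m$ rows of $S$.

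The heart of the proof is the preprocessing lemma. Given a rational vector $\vec{w}\in \mathbb{Q}^{n+1}$ and a bound $N$, I would use simultaneous Diophantine approximation, implemented via the LLL basis reduction algorithm on a suitable $(n+2)$-dimensional lattice, to compute in strongly polynomial time an integer vector $\bar{\vec{w}}$ whose norm depends only on $n$ and $N$, with the property that $\operatorname{sign}(\vec{w}\cdot \vec{y})=\operatorname{sign}(\bar{\vec{w}}\cdot \vec{y})$ for every $\vec{y}\in \mathbb{Z}^{n+1}$ with $\norm{\vec{y}}\le N$. The idea is to approximate the direction of $\vec{w}$ by a rational multiple $\lambda \bar{\vec{w}}$ whose common denominator is so small that the residual error $(\vec{w}-\lambda \bar{\vec{w}})\cdot \vec{y}$ is uniformly strictly smaller than $\abs{\vec{w}\cdot \vec{y}}$ whenever the latter is nonzero, ruling out sign changes on the integer box. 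This step is the main obstacle and the technical core of Frank--Tardos: one must push the rather weak reduction guarantees of LLL to a uniform sign-preservation property over an exponentially large set of integer points, which requires a careful choice of lattice and a delicate analysis of the shortest vector in the reduced basis.

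I would then apply this lemma row by row to the augmented matrix $[A \mid \vec{c}]$ of $S$, choosing $N$ to be a standard Cramer-rule bound on basic feasible solutions, so that whenever $\eval{S}\neq \emptyset$ there exists an integer solution $\vec{x}$ with $\norm{\vec{x}}\le N$ and $N\in 2^{\poly(\abs{S})}$. The preprocessed system $\bar{S}$ is then equivalent to $S$ on every integer point inside $\norm{\vec{x}}\le N$, hence equifeasible with $S$, yet every coefficient of $\bar{S}$ has bit-length depending only on $n$. Feeding $\bar{S}$ to Lenstra's algorithm, sharpened by the Kannan--Lov\'asz flatness-theorem analysis, decides feasibility using $n^{2.5n+o(n)}$ arithmetic operations on numbers of size polynomial in $\abs{\bar{S}}$; multiplying by the $\abs{S}$ cost of the $m$-fold preprocessing pass yields the claimed bound of $n^{2.5n+o(n)}\abs{S}$ arithmetic operations. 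The polynomial space guarantee then follows because the LLL preprocessing, Lenstra's recursive branching, and the Kannan--Lov\'asz subroutine can all be implemented so that every intermediate number has size polynomial in $\abs{S}$, and the recursion depth is at most $n$.
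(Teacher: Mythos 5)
This proposition is not proved in the paper at all: it is imported as a black box from Frank \& Tardos~\cite{FT87}, and the paper only uses it (in Algorithm~\ref{alg:satisfiability} and for Corollary~\ref{cor:ccfg-inclusion}), so there is no in-paper argument to compare yours against. That said, your sketch does follow the route by which the cited result is actually established in the literature: a strongly polynomial preprocessing step based on simultaneous Diophantine approximation (the sign-preserving rounding of each row of $[A \mid \vec{c}]$ over a bounded integer box, implemented with LLL) is exactly the Frank--Tardos contribution, and the $n^{2.5n+o(n)}$ operation count on the compressed system comes from Kannan's improvement of Lenstra's fixed-dimension integer programming algorithm, with the single factor $\abs{S}$ accounting for the pass over the $m$ rows. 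Two caveats: first, your text leaves the two genuinely hard ingredients---the uniform sign-preservation lemma and the $n^{2.5n+o(n)}$ analysis of the enumeration/branching---as named black boxes, so as written it is a correct proof outline rather than a proof; second, the attribution of the $2.5n$ exponent to a ``Kannan--Lov\'asz flatness-theorem analysis'' is slightly off historically (the bound used by Frank \& Tardos is Kannan's 1987 algorithm; the Kannan--Lov\'asz flatness bound is a later refinement and is not needed for the stated count). Neither caveat affects correctness for the purpose this proposition serves in the paper, where it is anyway taken on faith from~\cite{FT87}.
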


When we are interested in representing the set of all solutions of
$S$, we will employ the following proposition, which provides bounds
on the semi-linear representation of $\eval{S}$ and is a consequence
of Corollary~1 in~\cite{Pot91}.

\begin{proposition}[Pottier~\cite{Pot91}]
  \label{lem:pottier-bound}
  Let $S:A\vec{x}\ge \vec{c}$ be a system of linear Diophantine
  inequalities such that $A$ is an $m\times n$ matrix. Then
  $\eval{S}=\bigcup_{i\in I}L(\vec{b}_i;P_i)$ such that for all $i\in
  I$,
  \begin{align*}
    \norm{\vec{b}_i}, \norm{P_i}\le (n\norm{A}+\norm{\vec{c}}+2)^{m+n}.
  \end{align*}
\end{proposition}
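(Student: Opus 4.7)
The plan is to reduce the inhomogeneous inequality system $S$ to a homogeneous equality system, decompose its solution set via a Hilbert basis, and then invoke Corollary~1 of~\cite{Pot91} as a black box to bound the generators.

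First I homogenise: introduce $m$ slack variables $\vec{s}\in\N^m$ and a single homogenisation variable $y\in\N$, and form the $m\times(n+m+1)$ integer matrix $B\defeq [\,A\ \vert\ -I_m\ \vert\ -\vec{c}\,]$, which satisfies $\norm{B}\le\max(\norm{A},\norm{\vec{c}},1)$. Writing $\vec{z}=(\vec{x},\vec{s},y)$, the system $B\vec{z}=\vec{0}$ with $\vec{z}\ge\vec{0}$ and $y=1$ is equivalent to $A\vec{x}-\vec{s}=\vec{c}$ with $\vec{s}\ge\vec{0}$, i.e., to $A\vec{x}\ge\vec{c}$. Hence $\eval{S}$ equals the projection onto the first $n$ coordinates of the set $\{\vec{z}\in\N^{n+m+1}:B\vec{z}=\vec{0},\,y=1\}$.

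By Dickson's lemma, the monoid $M$ of non-negative integer solutions of $B\vec{z}=\vec{0}$ has a unique finite \emph{Hilbert basis} $H\subseteq\N^{n+m+1}$, its componentwise-minimal non-zero elements, and every element of $M$ is a non-negative integer combination of elements of $H$. In any such decomposition of a solution with $y=1$, the $y$-components of the summands must add up to $1$, so exactly one summand has $y$-component $1$ and all others have $y$-component $0$. Partitioning $H=H_0\,\cup\, H_1\,\cup\, H_{\ge 2}$ according to the $y$-component and projecting to the $\vec{x}$-coordinates, I obtain
\[
  \eval{S}=\bigcup_{\vec{h}\in H_1} L(\vec{b}_{\vec{h}};\,P),
\]
where $\vec{b}_{\vec{h}}$ is the $\vec{x}$-part of $\vec{h}$ and $P$ is the common set of $\vec{x}$-projections of elements of $H_0$ (if $H_1=\emptyset$ then $\eval{S}=\emptyset$). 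This has the required semi-linear shape, with $\norm{\vec{b}_{\vec{h}}},\norm{P}\le\norm{H}$.

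It remains to bound $\norm{H}$. Corollary~1 of~\cite{Pot91} bounds every Hilbert basis element of a homogeneous linear Diophantine system in terms of the number of rows, the number of columns, and the maximum absolute entry of the matrix. Substituting $m$ rows, $n+m+1$ columns, and $\norm{B}\le\max(\norm{A},\norm{\vec{c}},1)$, and simplifying, yields the stated bound $(n\norm{A}+\norm{\vec{c}}+2)^{m+n}$. The main obstacle is matching Pottier's precise closed-form bound to this particular expression: one must absorb the polynomial dependence on the number of columns and the contribution of the identity block (each of whose columns contributes only $1$ to $\norm{B}$) into the exponent $m+n$ and the additive constants $n$ and $2$. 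This manipulation is elementary but has to be done with care; the rest of the argument is a standard Hilbert basis decomposition.
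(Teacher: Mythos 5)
The paper never proves this proposition: it is imported from the literature, with only the remark that it ``is a consequence of Corollary~1 in~\cite{Pot91}''. So there is no internal proof to compare against, and your proposal is in effect a reconstruction of that unwritten derivation. Its structural half is correct and standard: introducing slack variables and a homogenisation variable $y$, decomposing the monoid of non-negative solutions of $B\vec{z}=\vec{0}$ via its finite Hilbert basis $H$, observing that any solution with $y=1$ uses exactly one basis element with $y$-component $1$ and none with $y$-component $\ge 2$, and projecting onto the $\vec{x}$-coordinates does yield $\eval{S}=\bigcup_{\vec{h}\in H_1}L\bigl(\pi(\vec{h});\pi(H_0)\bigr)$ with all base and period vectors bounded by $\norm{H}$.

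The gap is the quantitative step, which is the only place where the content of the proposition actually lives. You never state Pottier's Corollary~1, so ``substituting \ldots and simplifying yields the stated bound'' cannot be audited, and with your setup it is not automatic. Concretely: if Pottier's exponent is governed by the number of rows (or the rank) of the matrix, your homogenised system still has only $m$ rows and you land comfortably below the exponent $m+n$; but if the exponent involves the number of columns, your matrix $B$ has $n+m+1$ columns, and the extra $+1$ cannot simply be ``absorbed into the exponent $m+n$'' --- a bound of the form $(\cdot)^{\,n+m+1}$ is not in general at most $(\cdot)^{\,n+m}$. Likewise it matters whether Pottier bounds the $\ell_\infty$- or the $\ell_1$-norm of the Hilbert basis elements, since that determines whether the row-sum quantity $n\norm{A}+\norm{\vec{c}}+1$ is the right base to compare with. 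To make the argument complete you must quote the exact inequality from \cite{Pot91} (which norm, and whether rows, columns or rank appear in the exponent) and carry out the short computation; everything else in your write-up is sound.
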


\subsection{Time Hierarchies} 
\label{sec:weth} Let us recall the
definitions of the \emph{polynomial-time hierarchy} $\PH$~\cite{Sto76}
and the \emph{weak EXP hierarchy} \textsf{EXPH}~\cite{Hem89} in terms
of oracle complexity classes. As usual,
\begin{align*}
  \P& =\bigcup_{k>0}\DTIME(n^k) & \EXP &=\bigcup_{k>0}\DTIME(2^{n^k})\\
  \NP& =\bigcup_{k>0}\NTIME(n^k) & \NEXP &=\bigcup_{k>0}\NTIME(2^{n^k}).
\end{align*}
The aforementioned time hierarchies are now defined as
\begin{align*}
  \phsigma_0^\P \defeq \phpi_0^\P & \defeq \P & \phsigma_0^{\EXP}\defeq \phpi_0^{\EXP}&\defeq \EXP\\
  \phsigma_{i+1}^\P &\defeq \NP^{\phsigma_i^\P} & \phsigma_{i+1}^{\EXP}&\defeq \NEXP^{\phsigma_i^\P}\\
  \phpi_{i+1}^\P &\defeq \coNP^{\phsigma_i^\P} & \phpi_{i+1}^{\EXP}&\defeq \coNEXP^{\phsigma_i^\P}\\
  \PH\ & \defeq \bigcup_{i\ge 0}\phsigma_i^\P & \textsf{EXPH} & \defeq \bigcup_{i\ge 0} \phsigma_i^{\EXP}.
\end{align*}
For our lower bounds, we will rely on the following equivalent
characterisation of $\phsigma^{\EXP}_{i}$.
\begin{lemma}
  \label{lem:wexp-hierarchy-characterisation}
  For any $i>0$, a language $L\subseteq \{0,1\}^*$ is in
  $\phsigma^{\EXP}_i$ iff there exists a polynomial $q$ and a predicate
  $R\subseteq (\{0,1\}^*)^{i+1}$ such that for any $w\in \{0,1\}^n$,
  \begin{multline*}
    w\in L \text{ iff } \exists w_1\in \{0,1\}^{2^{q(n)}}.\forall w_2\in
    \{0,1\}^{2^{q(n)}}\cdots\\\cdots Q_i w_i\in \{0,1\}^{2^{q(n)}}.R(w,w_1,\ldots, w_i)
  \end{multline*}
  and $R(w,w_1,\ldots,w_i)$ can be decided in deterministic polynomial
  time.
\end{lemma}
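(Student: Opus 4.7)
The plan is to prove both directions of the equivalence.

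For $(\Leftarrow)$, assume $L$ admits the claimed characterisation. I build a $\NEXP^{\phsigma^P_{i-1}}$ machine that on input $w$ of length $n$ nondeterministically writes $w_1\in\{0,1\}^{2^{q(n)}}$ in time $2^{q(n)}$ and accepts iff an oracle returns ``yes'' on the single query $(w,w_1)$. Regarded as a decision problem whose input is the pair $(w,w_1)$ of length $m=n+2^{q(n)}$, the predicate $\forall w_2\,\exists w_3\cdots Q_i w_i.\,R(w,w_1,\ldots,w_i)$ has $i-1$ alternations over strings of length $2^{q(n)}\le\poly(m)$ followed by a polynomial-time test in $\poly(m)$, hence lies in $\phpi^P_{i-1}\subseteq\phsigma^P_{i-1}$ and is a legitimate oracle.

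For $(\Rightarrow)$, let $M$ be a $\NEXP^O$ machine with $O\in\phsigma^P_{i-1}$ that decides $L$ in time $2^{n^c}$. For input $w$ of length $n$, I encode an accepting computation as a string $w_1$ of length $2^{\poly(n)}$ listing a valid configuration-by-configuration trace of $M$ (including nondeterministic choices), the sequence of oracle queries $q_1,\ldots,q_N$ with guessed answers $a_1,\ldots,a_N\in\{0,1\}$, and, for every $j$ with $a_j=1$, a first-level existential witness $u_1^j$ for the $\phsigma^P_{i-1}$-certificate $\exists u_1\forall u_2\cdots Q_{i-1}u_{i-1}.\,\phi(q_j,\vec u)$ of membership in $O$. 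Trace validity is a polynomial-time check in $\abs{w_1}$; what remains is to verify the residual ``yes'' certificates $\forall u_2\exists u_3\cdots\phi(q_j,u_1^j,u_2,\ldots)$ and the entire ``no'' certificates $\forall u_1\exists u_2\cdots\neg\phi(q_j,u_1,\ldots)$, both beginning with $\forall$. I amalgamate them into the single prefix $\forall w_2\exists w_3\cdots Q_i w_i$ by letting $w_2=(j,v)$, where $j\in[N]$ and $v$ plays the role of $u_1$ when $a_j=0$ or of $u_2$ when $a_j=1$; each subsequent $w_k$ for $k\ge 3$ supplies the next witness for the query $q_j$ chosen by $w_2$, namely the $(k-1)$-th witness for a ``no'' certificate or the $k$-th witness for a ``yes'' certificate. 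The quantifier types align because ``yes'' and ``no'' certificates are out of phase by exactly one quantifier. The final predicate $R$ extracts $j$, $\vec u$, and $a_j$ from $(w_1,\ldots,w_i)$ and checks either $\phi(q_j,\vec u)$ or $\neg\phi(q_j,\vec u)$ together with trace validity in polynomial time in the argument length $2^{\poly(n)}$.

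The main obstacle is the bookkeeping in the combining step. One must (i)~pad every $w_k$ to a uniform length so that witnesses for all queries, possibly of differing sizes, can be accommodated; (ii)~ensure that if $w_2$ selects an out-of-range index, a malformed $v$, or a value inconsistent with the answer recorded in $w_1$, then $R$ defaults to ``accept'' so that universal challenges cannot spuriously defeat genuine certificates; and (iii)~in the ``yes'' case for small $i$, gracefully ignore the trailing $w_k$'s beyond the length of the residual certificate. The remaining ingredients---bounding the nondeterministic running time of $M$ to choose a sufficiently large $q(n)$, simulating $M$'s transitions inside $R$, and standard padding of oracle queries to a uniform length---are routine.
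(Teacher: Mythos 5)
Your overall strategy matches the paper's: for $(\Leftarrow)$ a $\NEXP$ guess of $w_1$ followed by a single oracle query, and for $(\Rightarrow)$ packing the trace, oracle queries, guessed answers, and first-level witnesses for the ``yes'' answers into $w_1$, leaving the remaining $i-1$ quantifier blocks to certify the guessed answers. Two remarks are in order.

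In the $(\Leftarrow)$ direction you assert $\phpi^\P_{i-1}\subseteq\phsigma^\P_{i-1}$, which is not a theorem --- it would collapse the polynomial-time hierarchy at that level. Your conclusion that the predicate is a legitimate oracle is still correct, but the justification should be different: an oracle machine can flip the answer returned by a $\phsigma^\P_{i-1}$ oracle, so $\NEXP^{\phsigma^\P_{i-1}}=\NEXP^{\phpi^\P_{i-1}}$. Equivalently, as the paper does, let the oracle language be the complementary $\phsigma^\P_{i-1}$ predicate $\exists w_2\cdots Q'_i w_i.\,\lnot R(w,w_1,\ldots,w_i)$ and accept precisely when the oracle answers ``no''.

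In the $(\Rightarrow)$ direction, the paper implicitly packs all $\ell$ per-query certificates component-wise into the blocks $w_2,\ldots,w_i$ and verifies them ``at once'', relying on the distribution of $\forall$ and $\exists$ over conjunctions on disjoint variable sets. You instead let the universal $w_2=(j,v)$ select which oracle query $j$ to challenge and interpret the later blocks as witnesses for that single query, with the default-to-accept and dummy-trailing-block conventions you list. Both routes correctly collapse the conjunction over all oracle queries into one $\Pi_{i-1}$ prefix, and your index/parity bookkeeping ($w_k\mapsto u_{k-1}$ for ``no'', $w_k\mapsto u_k$ for ``yes'') checks out, so this is a valid, essentially cosmetic, variant of the paper's argument.
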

Despite being in the spirit of an elementary result on computational
complexity, the author was unable to find a formal proof of
Lemma~\ref{lem:wexp-hierarchy-characterisation} in the standard
literature. It is somewhat stated informally without a proof
in~\cite{Hem89}. In order to keep this paper self-contained and for
the reader's convenience, a proof sketch of
Lemma~\ref{lem:wexp-hierarchy-characterisation} based on a proof of an
analogue characterisation of the polynomial-time hierarchy given
in~\cite{AB09} is provided in the appendix.

\subsection{Boolean Circuits} 
\label{sec:boolean-circuits}
A standard approach to raise the complexity of a problem known to be
complete for a complexity class by one exponent is to succinctly
represent the input, see \emph{e.g.}~\cite{GLV95,PY86}. A well-known
concept is to represent the input by Boolean circuits. In this paper,
for technical convenience we adapt the definition provided
in~\cite{GLV95}.

\begin{definition}
  \label{def:boolean-circuit}
  A \emph{Boolean circuit $\mathcal{C}$ of size $r$ with $n\le r$ inputs}
  is a function $f:[r]\to \{\&,\|,{\sim}, \uparrow,
  \downarrow_1\}\times [r]\times [r]$, where $f(i)=(t,j,k)$ iff the
  gate with \emph{index} $i$ is of \emph{type $t$}, \textit{i.e.}, an
  \emph{and}, \emph{or}, \emph{not}, \emph{input} or \emph{constant
    gate}, respectively, and $j,k<i$ are inputs of the gate, unless
  $t={\sim}$ in which case we require $j=k$.%, and if $t={\uparrow}$ or
  % $t={\downarrow_1}$ we require $j=k=0$.
\end{definition}

We identify each gate of $\mathcal{C}$ with an index from $[r]$, and
by convention the first $n\le r$ gates are $\mathit{input}$-gates, and
the $r$-th gate, \emph{i.e.}, the gate with index $r-1$, is treated as
the \emph{output gate} of $\mathcal{C}$.  Moreover for technical
convenience, we sometimes identify the various types of the gates by
natural numbers ordered as in Definition~\ref{def:boolean-circuit},
\emph{i.e.}, $\&$ is identified as $0$, $\|$ as $1$, \emph{etc.} By
using constant gates as gates with constant value $1$, an input $w\in
\{0,1\}^n$ to $\mathcal{C}$ induces a unique evaluation mapping
$e_w:[r]\to \{0,1\}$ defined in the obvious way, and $\mathcal{C}$
evaluates to true (false) on input $w$ if $e_w(r-1)=1$
($e_w(r-1)=0$). For brevity, we define $\mathcal{C}(w)\defeq
e_w(r-1)$, and if $m_1,\ldots,m_k\in \N$ then $\mathcal{C}(m_1,\ldots
m_k)$ is the output of $\mathcal{C}(w_1\cdots w_k)$, where each
$w_i\in \{0,1\}^{\lceil \log m_i \rceil}$ is the binary, if necessary
padded, representation of $m_i$.

For the remainder of this section, we will briefly recall and
elaborate on some results and concepts about circuits and succinct
encodings from Papadimitriou's book~\cite{Pap94} on computational
complexity. Given a circuit $\mathcal{C}$ and an input $w\in
\{0,1\}^n$ for some $n\ge 0$, it is well-known that determining
$\mathcal{C}(w)$ is
$\P$-complete~\cite[Thm.\ 8.1]{Pap94}. In~\cite{Pap94}, the proof of
$\P$-hardness is established by showing that the computation table of
a polynomial-time Turing machine can be encoded as a Boolean
circuit. For an $f(n)$-time-bounded Turing machine $M$, a computation
table is an $f(n)\times f(n)$ grid of cells $T_{i,j}$ from an alphabet
that allows for uniquely encoding configurations of $M$ such that the
configuration of $M$ in step $i$ while running on $w$ is encoded in
the $i$-th row. Figure~\ref{fig:computation-table} graphically
illustrates the concept of a computation table, where $0$ and $1$ are
alphabet symbols of $M$, and $\triangleright$ and $\Box$ are left
delimiters and blank symbols, respectively. The crucial fact for
encoding computation tables as Boolean circuits is that for $i,j>1$,
the symbol at $T_{i,j}$ only depends on a fixed number of cells,
namely $T_{i-1,j-1}$, $T_{i-1,j}$ and $T_{i-1,j+1}$, illustrated by
the gray-shaded cells in Figure~\ref{fig:computation-table}. It is
then clear that the alphabet of a computation table can be encoded
into a binary alphabet of truth values, and that a constant basic
circuit can be constructed from $M$ which ensures that the values of
the cells are correctly propagated along the $y$-axis. It then follows
that $M$ accepts $w$ iff there exists a computation table ending in an
accepting state iff the circuit encoding this computation table
evaluates to true.

\begin{figure}
  \begin{center}
      \begin{tabular}{|c|c|c|c|c|c|}
        \hline
        $\vdots$ & $\vdots$ & $\vdots$ & $\vdots$ & $\vdots$ & $\vdots$\\
        \hline
        $\triangleright$ & $1$ & $1$ & $0_{q_0}$ & $\Box$ & $\Box$\\
        \hline
        $\triangleright$ & $1$ & \cellcolor{Gray}$1_{q_1}$ & $1$ & $\Box$ & $\Box$\\
        \hline
        $\triangleright$ & \cellcolor{Gray}$1_{q_2}$ & \cellcolor{Gray}$0$ & \cellcolor{Gray}1 & $\Box$ & $\Box$\\
        \hline
        $\triangleright_{q_0}$ & $1$ & 0 & 1 & $\Box$ & $\Box$\\
        \hline
      \end{tabular}
  \end{center}
  \caption{Graphical illustration of a computation table of a
    time-bounded Turing machine $M$. For example, here we have
    $T_{2,2}=0_{q_2}$. The control state and the head position of $M$
    is indicated by tape symbols with some $q_i$ as subscript. The
    four gray-shaded cells illustrate that successive cells only
    depend on three preceding cells.}
  \label{fig:computation-table}
\end{figure}

\begin{figure*}[t]
  \begin{center} 
    \includegraphics[scale=1]{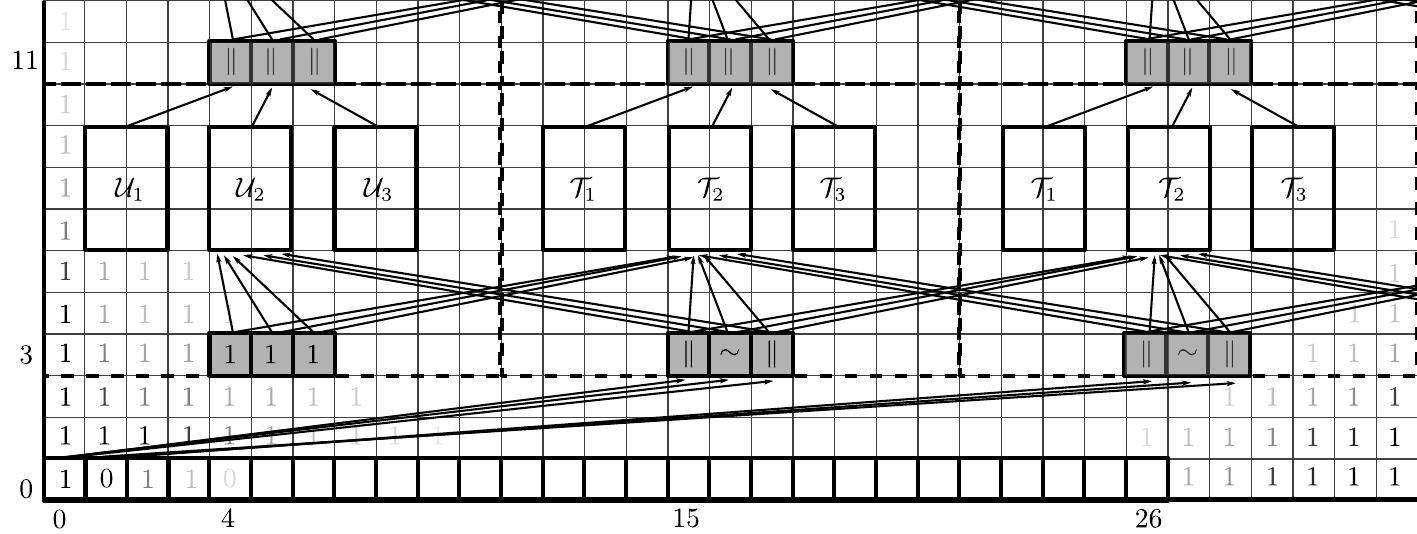}
  \end{center}
  \caption{Illustration of the approach of how to succinctly encode a
    Boolean circuit encoding the computation table of a
    polynomial-time Turing machine on an input of exponential size.
    Each square represents a gate, all gates not surrounded by boxes
    are assumed to be gates with constant value 1.}
  \label{fig:encoding}
\end{figure*}

In the next section, for our lower bound we will apply
Lemma~\ref{lem:wexp-hierarchy-characterisation}, which entails
deciding $(w,w_1,\ldots,w_i)\in R$, where the $w_j$ are of size
exponential in $n=\abs{w}$. Let $M_w$ be a polynomial-time Turing
machine deciding $R$ for a fixed $w$. The $w_1,\ldots w_j$ will
implicitly be coded into natural numbers, so it will not be possible
to construct a Boolean circuit $\mathcal{C}_w$ upfront that can
evaluate $M_w$ on the input $w_1,\ldots w_j$ of exponential size,
since we are required to establish a polynomial-time
reduction. Instead, we will \emph{succinctly encode $\mathcal{C}_w$}
via another Boolean circuit $\mathcal{D}_w$. More precisely,
$\mathcal{C}_w$ is encoded via $\mathcal{D}_w$ as follows:
$\mathcal{D}_w$ has $3r(n)+3$ input gates for some fixed polynomial
$r$ depending on $M_w$ such that for $i,j,k\in [2^{r(n)}]$ and $t\in
[5]$, $\mathcal{D}_w(t,i,j,k)=1$ iff the defining function $f$ of
$\mathcal{C}_w$ gives $f(i)=(t,j,k)$, \emph{i.e.}, that the gate with
index $i$ of $\mathcal{C}_w$ is of type $t$ and has input gates with
index $j$ and $k$. In particular, $\mathcal{D}_w$ and henceforth
$\mathcal{C}_w$ only depend on $w$ and $M$, and are \emph{independent}
of $w_1,\ldots w_i$. Note that we can view an assignment of truth
values to the gates of $\mathcal{C}_w$ as a string of length
$2^{r(n)}$.

More generally, it is known that if $\mathcal{C}$ with no input gates
is succinctly given by some circuit $\mathcal{D}$, determining whether
$\mathcal{C}$ evaluates to true is $\EXP$-complete~\cite[Thm.\ 20.2 \&
  Cor.\ 2]{Pap94}. The idea underlying the hardness proof is a
straight-forward generalisation of the approach outlined in the
paragraph above. The circuit $\mathcal{C}$ encodes the computation
table of an $\EXP$ Turing machine $M$. Since the indices of the gates
of $\mathcal{C}$ can be represented in \emph{binary}, via
$\mathcal{D}$ we can encode $\mathcal{C}$ by implicitly encoding an
exponential number of the constant basic circuit ensuring proper
propagation between consecutive cells. This approach can now be
adapted for our purpose, \emph{i.e.}, to evaluate a polynomial-time
Turing machine on an input of exponential size. The major challenge is
to transfer input to the succinctly encoded circuit on-the-fly.

Referring to Lemma~\ref{lem:wexp-hierarchy-characterisation} and given
$M_w$ as above, we can construct in logarithmic space a Boolean
circuit $\mathcal{D}_w$ encoding $\mathcal{C}_w$ such that the input
$(w_1,\ldots,w_i)$ to $\mathcal{C}_w$ is obtained from the first
$i2^{q(n)}$ gates, and $\mathcal{C}_w$ encodes a computation table of
$M_w$ on this input. Figure~\ref{fig:encoding} illustrates how this
can be realised. Each box in Figure~\ref{fig:encoding} is a gate, and
a cell of the computation table of $M_w$ while being executed on
$(w_1,\ldots,w_n)$ is encoded into the dashed boxes, or more
specifically, into the three framed gray-shaded boxes on the bottom of
the dashed boxes. Here, we assume that three bits are sufficient to
represent the alphabet of the computation table of $M_w$, and that
$\triangleright_{q_0}$ is encoded as $111$. Consequently, the gates
with index $(3,4),(3,5)$ and $(3,6)$, representing the cell $T_{1,1}$
of the computation table of $M_w$, are gates with constant value one,
as indicated in Figure~\ref{fig:encoding}. Now we want the values of
the cells $T_{1,2}$, $T_{1,3}$, \emph{etc.} of the computation table
of $M_w$ to be equivalent to $w_1\cdots w_n$, which are represented by
the gates with indices $(0,0), \ldots (0,i2^{q(n)})$. The gates in
$\mathcal{D}_w$ corresponding to $T_{1,2}$ and $T_{1,3}$ have indices
$(3,15),(3,16),(3,17)$ and $(3,26), (3,27), (3,28)$, respectively.
Those gates have the gates $(0,0)$ and $(0,1)$ as their inputs,
respectively. Suppose that in our encoding $1$ is represented as $101$
and $0$ as $010$, the sequence of $\|$, $\sim$ and $\|$ gates ensures
that $1$ is mapped to $101$ and $0$ to $010$. Consequently, the gates
$(3,15),(3,16),(3,17)$ can correctly transfer the alphabet symbols
$\{0,1\}$ of $M_w$ into the internal representation of the computation
table, and in particular copy the first symbol of the input string
$w_1\cdots w_i$ into the internal representation of the computation
table. In the example in Figure~\ref{fig:encoding}, the gates with
index $(3,15),(3,16),(3,17)$ would output $1$, $0$ and $1$,
respectively, since the gate $(0,0)$ has value $1$ which corresponds
to the first symbol of the input string $w_1$. As stated before, in
our reduction this value is provided on-the-fly. The rest of the
reduction follows standard arguments. Each dashed box contains
circuits $\mathcal{T}_1, \mathcal{T}_2$ and $\mathcal{T}_3$ which
compute the consecutive cell of the simulated computation table of
$M_w$, \emph{i.e.}, the values of the three gates representing this
cell. The dashed boxes on the left use different circuits
$\mathcal{U}_1, \mathcal{U}_2$ and $\mathcal{U}_3$ since they do not
have a left neighbor. All unused gates can assumed to be dummy gates,
\emph{i.e.} gates with constant value $1$, as indicated in
Figure~\ref{fig:encoding}. It follows that $M_w$ accepts
$(w_1,\ldots,w_i)$ iff $\mathcal{C}_w$ evaluates to true on the input
provided, \emph{i.e.}, the value of the gate with the highest index of
$\mathcal{C}_w$ is equal to 1.

In order to encode $\mathcal{C}_w$ succinctly, it is clear that due to
the regular structure of $\mathcal{C}_w$, the type and input gates to
any gate can be computed from a given index of a gate by a
polynomial-time algorithm. The circuit $\mathcal{D}_w$ can now be
taken as the circuit corresponding to this algorithm.

\section{Completeness of the $\Sigma_{i+1}$-Fragment of Presburger Arithmetic for $\phsigma_{i}^{\EXP}$}
\label{sec:complexity}

In this section, we show that PA($i+1$) is
$\phsigma_{i}^{\EXP}$-complete for every fixed $i>0$. We begin with
the lower bound and first note that it is not possible to adapt
Berman's hardness proof~\cite{Ber80} in order to get the desired
result, since it relies on a trick by Fischer \& Rabin~\cite{FR74} in
order to perform arithmetic operations on a bounded interval over
large numbers which linearly increases the number of quantifier
alternations. Instead, we will partly adapt concepts and ideas
introduced by Gr\"adel in his hardness proof for PA($2$)
in~\cite{Grae89} and Gottlob, Leone \& Veith in~\cite{GLV95}. Roughly
speaking, we aim for ``implementing''
Lemma~\ref{lem:wexp-hierarchy-characterisation} via a PA($i+1$)
formula, which will entail encoding bit strings of exponential size
into natural numbers and evaluating Boolean circuits in Presburger
arithmetic on-the-fly. The upper bound does not follow immediately and
requires combining solution intervals established by Weispfenning
in~\cite{Wei90} with Proposition~\ref{prop:frank-tardos-ip}.

\subsection{Lower Bounds}
\label{sec:lower-bound}

The goal of this section is to prove the following proposition.
\begin{proposition}
  \label{prop:pa-hardness}
  Let $L\subseteq \{0,1\}^*$ be a language in $\phsigma_{i}^{\EXP}$,
  $i>0$ and $w\in \{0,1\}^*$. There exists a polynomial-time
  computable PA($i+1$) formula $\Phi_{L,w}$ such that $w\in L$ iff
  $\Phi_{L,w}$ is valid.
\end{proposition}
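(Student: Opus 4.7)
The plan is to construct $\Phi_{L,w}$ by first applying Lemma~\ref{lem:wexp-hierarchy-characterisation} to obtain
\[
w\in L \;\iff\; \exists w_1\in\{0,1\}^{2^{q(n)}}.\,\forall w_2\cdots Q_i w_i.\, R(w,w_1,\ldots,w_i),
\]
where $n=\abs{w}$ and $R$ is decided in polynomial time by some deterministic Turing machine $M_w$ that is itself constructible from $w$ in polynomial time. Each exponentially long string $w_j$ will be encoded as a single non-negative integer $x_j$, so that $\Phi_{L,w}$ has an outer quantifier prefix of $i$ alternating blocks $Q_1 x_1 \cdots Q_i x_i$ ranging over these encodings, followed by one additional alternation block whose purpose is to evaluate the polynomial-time predicate $R$ on the encoded inputs.

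To evaluate $R$ on exponentially long inputs I would invoke the succinct-circuit machinery of Section~\ref{sec:boolean-circuits}: from $M_w$ one obtains in polynomial time a polynomial-size Boolean circuit $\mathcal{D}_w$ that succinctly encodes an exponential-size circuit $\mathcal{C}_w$ of size $2^{r(n)}$, whose first $i\cdot 2^{q(n)}$ gates are input gates to be filled with the bits of $w_1,\ldots,w_i$ and whose output gate equals $1$ iff $R(w,w_1,\ldots,w_i)$ holds. The statement ``$\mathcal{C}_w$ evaluates to $1$'' admits two equivalent formulations: the $\Sigma_2$ form $\exists y.\,\forall j<2^{r(n)}.\,\operatorname{cons}(j,y)\wedge y_{\mathrm{out}}=1$, where $y$ codes a candidate gate-value assignment and $\operatorname{cons}(j,y)$ asserts that the $j$-th bit of $y$ agrees with the type and fan-in specified by $\mathcal{D}_w(j)$; and the dual $\Pi_2$ form $\forall y.\,\exists j.\,\neg\operatorname{cons}(j,y)\vee y_{\mathrm{out}}=1$, whose equivalence rests on the fact that the consistency constraints determine $y$ uniquely from the inputs. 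The key observation is that by using the $\Sigma_2$ form when $Q_i=\exists$ (that is, for odd $i$) and the $\Pi_2$ form when $Q_i=\forall$ (for even $i$), the outer quantifier of the circuit-evaluation fragment has the same polarity as $Q_i$ and can be \emph{merged} into the innermost block of the prefix, so that the total number of alternating quantifier blocks is exactly $i+1$, matching $\Sigma_{i+1}$.

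The hard part, I expect, is to express the predicate $\operatorname{cons}(j,y)$ and the bit-access primitives for $x_1,\ldots,x_i$ by Presburger formulas of polynomial size. Since the gate index $j$ and bit positions can take values of magnitude up to $2^{r(n)}$ or $2^{q(n)}$, naive extraction via $\lfloor x/2^j\rfloor\bmod 2$ is unavailable, and the Fischer--Rabin simulation of $2^j$ is ruled out because it would introduce $\Theta(\log j)$ further quantifier alternations. The plan is therefore to adapt Gr\"adel's encoding from~\cite{Grae89}, using the Chinese Remainder Theorem together with a polynomially-bounded family of primes $p_1,p_2,\ldots$ (of which sufficiently many exist in short intervals by the Prime Number Theorem) to index $2^{q(n)}$ and $2^{r(n)}$ bit positions: the $k$-th bit of $x$ is then read off as a residue of $x$ modulo the $k$-th such prime, a condition expressible by a Presburger formula with a few auxiliary existential quantifiers over the prime itself (verified by a short Presburger certificate) and over the corresponding quotient, whose polarity matches the innermost block and are therefore absorbed there. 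Evaluating $\mathcal{D}_w$ on a given gate index contributes a further polynomial-size existentially quantified subformula with one variable per gate of $\mathcal{D}_w$, again absorbed into the innermost block. Wrapping all of this inside $Q_1 x_1\cdots Q_i x_i$ then yields the desired polynomial-time computable $\Sigma_{i+1}$-formula $\Phi_{L,w}$.
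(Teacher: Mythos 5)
Your proposal follows essentially the same road map as the paper: apply Lemma~\ref{lem:wexp-hierarchy-characterisation}, encode the exponentially long witnesses via Gr\"adel's Chinese-remainder scheme, evaluate the polynomial-time predicate $R$ through a succinctly encoded Boolean circuit $\mathcal{C}_w$, and keep the alternation depth at $i+1$ by aligning the quantifier introducing the gate-value assignment with $Q_i$ and using the dual ($\Sigma_2$ vs.\ $\Pi_2$) formulation of ``$\mathcal{C}_w$ evaluates to $1$'' depending on the parity of $i$. The only organizational difference from the paper is that you introduce a fresh block variable $y$ for the gate assignment, whereas the paper folds it into $x_i$ (the last $2^{r(n)} - 2^{q(n)}(i-1)$ ``bits'' of the encoding of $x_i$ are the internal gate values); these are the same trick.

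There is, however, one genuine omission: you never require that the integers $x_1,\dots,x_i$ are \emph{valid} Chinese-remainder encodings, i.e., that for every index $k$ all primes in the interval $[(k+1)^3,(k+2)^3)$ assign the same residue class ($0$ or $1$) to $x_j$. This is the role of the paper's $\Pi_1$-formulas $\Psi_{\mathit{valid},r(n)}(x_j)$, inserted as a conjunct for existentially quantified $x_j$ and as a hypothesis of an implication for universally quantified $x_j$. Without them your bit-access primitives are ill-defined: if $x_j$ has inconsistent residues on two primes of the same interval, the $\Sigma_1$ and $\Pi_1$ forms of ``$\mathrm{bit}_k(x_j)=b$'' disagree, the gate assignment $y$ satisfying $\operatorname{cons}$ is no longer unique, and your appeal to ``the consistency constraints determine $y$ uniquely from the inputs'' collapses; either side of the game (the prover on odd $j$, the adversary on even $j$) could exploit the ambiguity. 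One must also verify that adding these $\Pi_1$-checks preserves the alternation count; it does, because on an existential block the leading $\forall$ of $\Psi_{\mathit{valid}}$ merges into the next universal block, and on a universal block the leading $\exists$ of $\neg\Psi_{\mathit{valid}}$ merges into the next existential block (with the innermost one, $\Psi_{\mathit{valid}}(x_i)$, contributing to the $(i{+}1)$-st block itself). A smaller slip: the existence of a prime in every interval $[k^3,(k+1)^3)$ does not follow from the Prime Number Theorem alone; the paper (following Gr\"adel) relies on Ingham's theorem on primes in short intervals, with Cheng's explicit bound for small $k$.
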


To this end, we employ the characterisation of $\phsigma_i^{\EXP}$ in
Lemma~\ref{lem:wexp-hierarchy-characterisation}. Let $M$ be the
deterministic polynomial-time Turing machine deciding $R$ from
Lemma~\ref{lem:wexp-hierarchy-characterisation}, and let $M_w$ be such
a Turing machine deciding $R$ for a fixed input $w\in \{0,1\}^n$,
which can be computed from $M$ in logarithmic space. The bit strings
$w_1$ to $w_i$ from Lemma~\ref{lem:wexp-hierarchy-characterisation}
constituting the input to $M_w$ are represented in our reduction via
natural numbers assigned to first-order variables
$\vec{x}=(x_1,\ldots,x_i)$. The precise encoding of a $w_j$ via $x_j$
is discussed below. For now, it is only important to mention that not
every natural number encodes a bit string. Let us focus on the
high-level structure of $\Phi_{L,w}$:
\begin{multline}
  \label{eqn:reduction-even}
  \Phi_{L,w} \defeq \exists x_1.\forall x_2\cdots
  Q_i x_i.\bigwedge_{1\le j\le i,~j\text{
        odd}}\Psi_{\mathit{valid,r(n)}}(x_j)\wedge\\ \wedge
  \Big(\bigwedge_{1\le j\le i,~j\text{
        even}} \Psi_{\mathit{valid},r(n)}(x_j)\Big) \rightarrow
  \Psi_{M_w}(x_1,\ldots,x_i).
\end{multline}
Unsurprisingly, the alternation of quantifiers in
Lemma~\ref{lem:wexp-hierarchy-characterisation} is reflected by the
alternation of quantifiers in (\ref{eqn:reduction-even}), so
$Q_i=\exists$ if $i$ is odd and $Q_i=\forall$ if $i$ is even. The
formula $\Psi_{\mathit{valid},r(n)}(x_i)$ is a $\Pi_1$-formula, and
$\Psi_{M_w}(x_1,\ldots,x_i)$ is a formula in the Boolean closure of
$\Sigma_1$ if $i$ is odd and a $\Sigma_1$-formula if $i$ is even. The
first conjunct ensures that the existentially quantified variables
represent encodings of bit strings and the second conjunct that, under
the additional assumption that the universally quantified variables
encode valid bit strings as well, $M_w$ accepts the bit strings
encoded in $x_1,\ldots, x_i$. For the given $w\in \{0,1\}^n$, those
formulas are concrete instances of a family of formulas, and $r(n)$ is
an index in this family for some polynomial $r(n)$ which dominates
$q(n)$ in Lemma~\ref{lem:wexp-hierarchy-characterisation} and is made
more precise at a later stage. Consequently, for a fixed $i>0$, we
have that $\Phi_{L,w}$ is a PA($i+1$) formula.

In our reduction, we have to take extra care to prevent the
``accidental'' introduction of quantifier alternations. In general
when providing formulas, we adapt Gr\"adel's approach in~\cite{Grae89}
and provide \emph{neutral formulas}, which are open polynomially
equivalent $\Sigma_1$- and $\Pi_1$-formulas. This ensures that, for
instance, we do not have to care about whether we could possibly
introduce a new quantifier alternation if a formula is used on the
left-hand side of an implication. When providing a neutral
$\Sigma_1$-formula $\Phi(\vec{x})=\exists
\vec{y}.\varphi(\vec{x},\vec{y})$, we will denote by
$\bar{\Phi}(\vec{x})=\forall \vec{y}.\bar{\varphi}(\vec{x},\vec{y})$
its neutral equivalent $\Pi_1$ counterpart. For the sake of consistent
naming, whenever $\Phi(\vec{x})$ occurs as a subformula in some other
formula, we \emph{implicitly} assume that it is appropriately replaced
such that the resulting formula is either a $\Sigma_1$- or a
$\Pi_1$-formula, depending on the context. Likewise, if for instance
$\Phi(\vec{x})$ occurs as a negated subformula in a formula that is
supposed to be existentially quantified, we assume that this
subformula is \emph{implicitly} replaced by $\exists
\vec{x}.\neg(\bar{\varphi}(\vec{x},\vec{y}))$, and
$\neg(\bar{\varphi}(\vec{x},\vec{y}))$ is treated in the same way if
it is not yet quantifier-free. In this way, we can always make sure to
result in $\Sigma_1$- or $\Pi_1$-formulas.

We now turn towards the details of our reduction and begin with
discussing the encoding of bit strings as natural numbers we use
subsequently. The encoding we use is due to Gr\"adel~\cite{Grae89}. In
his $\NEXP$ lower bound for PA($2$) he exploits a result due to
Ingham~\cite{Ing40,Che10} that for any sufficiently
large\footnote{Cheng~\cite{Che10} provides explicit bounds on Ingham's
  result~\cite{Ing40} and shows that this statement holds for all
  $i\in \N$ such that $i>2^{2^{15}}$. As in~\cite{Grae89}, for brevity
  we will use Ingham's result as if it were true for all $i>0$. It
  will be clear that we could add Cheng's offset to all numbers
  involved, causing a constant blowup only.} $i\in \N$ there is at
least one prime in the interval $[i^3,(i+1)^3)$. Given a bit string
  $w=b_1\cdots b_n\in \{0,1\}^n$, a natural number $a\in \N$ encodes
  $w$ if for all $1\le i\le n$ and
\begin{align*}
  \text{for all primes } p\in [i^3,(i+1)^3): a\equiv b_i \bmod p.
\end{align*}
The existence of such an $a$ is then guaranteed by the Chinese
remainder theorem. Given a fixed $n>0$, we call $a\in \N$ a
\emph{valid encoding} if for every $1\le i\le n$, either $a\equiv
0\bmod p$ or $a\equiv 1\bmod p$ for all prime numbers $p\in
[i^3,(i+1)^3)$.
    
In order to enable the extraction of bits of bit strings encoded as
naturals, we show how to check for divisibility with a natural number
whose number of bits is fixed. Next, we show how to evaluate a Boolean
circuit in Presburger arithmetic. This serves two purposes: first, it
allows for deciding if a given number lies in an interval
$[i^3,(i+3)^3)$ and for testing whether a given number is a prime due
  to the AKS primality test~\cite{AKS02}. Second, it allows for
  simulating $M_w$ discussed above on an input of exponential size
  using its succinct encoding via a circuit as discussed in
  Section~\ref{sec:boolean-circuits}. Putting everything together
  eventually yields the desired reduction.

We begin with a family of quantifier-free formulas
$\Phi_{\mathit{bin},n}(\vec{x},x)$ such that given $\vec{b}\in
\{0,1\}^n$ and $b\in \N$, $\Phi_{\mathit{bin},n}(\vec{b},b)$ holds if
$\vec{b}$ is the binary representation of $b$. Consequently, this
formula implicitly constraints $b$ such that $b\in [2^n]$:
\begin{align}
  \Phi_{\mathit{bin},n}(\vec{x},x) & \defeq \bigwedge_{i\in [n]} (x_i=0 \vee x_i= 1) \wedge x=\sum_{i\in [n]}2^ix_i.
\end{align}
Next, we provide a family of neutral formulas
$\Phi_{\mathit{mod},n}(x,y)$ such that for $a,b$ with $b\in [2^n]$,
$\Phi_{\mathit{mod},n}(a,b)$ holds iff $a\equiv 0\bmod
b$. Essentially, $\Phi_{\mathit{mod},n}(x,y)$ realises a formula for
bounded multiplication. In contrast to a formula with the same purpose
given in~\cite{Grae89}, it is not recursively defined and of size
$O(n)$ as opposed to $O(n\log n)$ when binary encoding of numbers is
assumed. The latter fact will be useful in
Section~\ref{sec:derived-results}. The underlying idea of the
subsequent definitions is that if the binary expansion of $b$ is $b =
\sum_{i\in [n]} 2^i b_i~\text{and}~bk=a$ for some $k\ge 0$ then $a$
can be written as $a=\sum_{i\in [n]}2^i a_i$ with $a_i=kb_i$:
\begin{align}
  \notag
  \Phi_{\mathit{dig},n}(\vec{x},\vec{y},k) & \defeq \bigwedge_{i\in [n]} 
  (y_i=0\rightarrow x_i=0 \wedge  y_i=1 \rightarrow x_i=k)\\
  \notag
  \Phi_{\mathit{mod},n}(x,y) & \defeq \exists \vec{x}. \exists \vec{y}.
  \exists k. \Phi_{\mathit{bin},n}(\vec{y},y) \wedge\\& ~~~~~~~~
  \label{eqn:qfpa-mod}
  \wedge \Phi_{\mathit{dig},n}(\vec{x},\vec{y},k) \wedge x=\sum_{i\in [n]} 2^i x_i\\
  \notag
  \bar{\Phi}_{\mathit{mod},n}(x,y) & \defeq \forall \vec{x}. \forall \vec{y}.
  \forall k. \big(\Phi_{\mathit{bin},n}(\vec{y},y) \wedge\\& ~~~~~~
  \notag
  \wedge \Phi_{\mathit{dig},n}(\vec{x},\vec{y},k)\big) \rightarrow x=\sum_{i\in [n]} 2^i x_i.
\end{align}

We now turn towards evaluating Boolean circuits with suitable formulas
in Presburger arithmetic. The subsequent formulas for evaluating a
circuit $\mathcal{C}$ with $n$ input and $r$ gates in total are
essentially an adaption of a construction given by Gottlob, Leone \&
Veith in~\cite{GLV95}. It is easily checked that for $a\in [2^n]$,
$\Phi_{\mathcal{C}}(a)$ holds iff $\mathcal{C}(a)=1$. In
$\Phi_{\mathcal{C}}$, the input to $\mathcal{C}$ is encoded via a
dimension $n$ vector of first-order variables $\vec{x}$ and the
Boolean assignment to the gates via a dimension $r$ vector of
first-order variables $\vec{y}$, which are implicitly assumed to range
over $\{0,1\}$. First, we provide a formula ensuring that the
structure of the gates of $\mathcal{C}$ is correctly encoded in
$\vec{y}$:
\begin{multline}
  \label{eqn:pa-circuit-formula}
  \Phi_{\mathcal{C},\mathit{gates}}(\vec{x},\vec{y}) \defeq\\
     \bigwedge_{i\in [r]} \left \{
  \begin{array}{ll}
    y_i = 1 \leftrightarrow (y_j=1 \wedge y_k=1) & \text{if } f(i)=(\&,j,k)\\
    y_i = 1 \leftrightarrow (y_j=1 \vee y_k=1) & \text{if } f(i)=(\|,j,k)\\
    y_i = 1 \leftrightarrow y_j=0 & \text{if } f(i)=({\sim},j,k)\\
    y_i = x_i & \text{if } f(i)=(\uparrow,0,0)\\
    y_i = 1 & \text{if } f(i)=(\downarrow_1,0,0).
  \end{array}\right.
\end{multline}
Next, the formula $\Phi_\mathcal{C}(x)$ defined below now enables us
to determine whether $\mathcal{C}$ accepts a given input encoded into
the first-order variable $x$:
\begin{align}
  \label{eqn:pa-circuit-eval-exist}
  \Phi_{\mathcal{C}}(x) & \defeq \begin{aligned}[t] \exists \vec{x}.\exists \vec{y}.\exists y.
  \Phi_{\mathit{bin},n}(\vec{x},x) \wedge \Phi_{\mathit{bin},r}(\vec{y},y) 
  \wedge\\
  \wedge \Phi_{\mathcal{C},\mathit{gates}}(\vec{x},\vec{y})\wedge y_r = 1\end{aligned}\\
  \label{eqn:pa-circuit-eval-forall}
  \bar{\Phi}_{\mathcal{C}}(x) & \defeq \begin{aligned}[t]\forall \vec{x}.\forall \vec{y}.\forall y.
  \big(\Phi_{\mathit{bin},n}(\vec{x},x) \wedge \Phi_{\mathit{bin},r}(\vec{y},y) 
  \wedge\\
  \wedge \Phi_{\mathcal{C},\mathit{gates}}(\vec{x},\vec{y})\big)\rightarrow y_r = 1.\end{aligned}
\end{align}

We now show how a predicate determining whether a given number $a\in
\N$ is a prime number in the interval $[b^3,(b+1)^3)$ for some $b>0$
  representable by $n$ bits can be realised. It is easily verified
  that any number in this interval can be represented by at most
  $m=3(n+1)$ bits. Moreover as discussed above, both conditions can be
  decided in polynomial time. Therefore we can construct in
  logarithmic space a Boolean circuit $\mathcal{C}_{\mathit{prime},n}$
  with $m+n$ input gates implementing this
  predicate~\cite[Thm.\ 8.1]{Pap94} and define
\begin{multline}
  \label{eqn:prime-number-in-interval-exist}
  \Phi_{\mathit{prime},n}(x,y) \defeq \exists \vec{x}.\exists \vec{y}.\exists z.
  \Phi_{\mathit{bin},m}(\vec{x},x) \wedge
  \Phi_{\mathit{bin},n}(\vec{y},y) \wedge\\ 
  \wedge z = \sum_{i\in [m]} 2^ix_i + 2^{m}\sum_{i \in [n]} 2^i y_i \wedge
  \Phi_{\mathcal{C}_{\mathit{prime},n}}(z)
\end{multline}
\begin{multline}
  \label{eqn:prime-number-in-interval-forall}
  \bar{\Phi}_{\mathit{prime},n}(x,y) \defeq \forall \vec{x}.\forall \vec{y}.\forall z.\big(
  \Phi_{\mathit{bin},m}(\vec{x},x) \wedge
  \Phi_{\mathit{bin},n}(\vec{y},y) \wedge\\ 
  \wedge z = \sum_{i\in [m]} 2^ix_i + 2^{m} \sum_{i \in [n]} 2^i y_i\big) 
  \rightarrow \Phi_{\mathcal{C}_{\mathit{prime},n}}(z).
\end{multline}
The first line of $\Phi_{\mathit{prime},n}(x,y)$ converts $x$ and $y$
into their binary representation. Next, the second line first
concatenates these bit representations via the additional variable $z$
by appropriately shifting the value of $y$ by $m$ bits, and finally
$z$ is passed to $\mathcal{C}_{\mathit{prime},n}$. Consequently, we
have that $\Phi_{\mathit{prime},n}(a,b)$ holds iff $a$ is prime and
$a\in [b^3,(b+1)^3)$.

We are now in a position in which we can define a family of
$\Pi_1$-formulas $\Psi_{\mathit{valid},n}(x)$ used in
(\ref{eqn:reduction-even}) that allow for testing whether some $a\in
\N$ represents a valid respectively invalid encoding of a bit string
of length $2^n$. For valid encodings, we wish to make sure that all
primes in every relevant interval $[b^3,(b+1)^3)$ have uniform residue
  classes in $a$ for all $1\le b\le 2^n$, \emph{i.e.}, for any two
  primes $p_1,p_2\in [b^3,(b+1)^3)$ we either have $a\equiv 0\mod p_1$
    and $a\equiv 0\mod p_2$, or $a-1\equiv 0\mod p_1$ and $a-1\equiv
    0\mod p_2$. Let $m$ be as above,
\begin{multline*}
  \Psi_{\mathit{valid},n}(x) \defeq \forall y.\forall p_1.\forall
  p_2.\big(1\le y\le 2^{n} \wedge 
  \\\wedge \Phi_{\mathit{prime},{n+1}}(p_1,y) \wedge \Phi_{\mathit{prime},{n+1}}(p_2,y)\big)
  \rightarrow\\ \rightarrow \big((\Phi_{\mathit{mod},m+3}(x,p_1) \wedge 
  \Phi_{\mathit{mod},m+3}(x,p_2)) 
  \vee\\\vee (\Phi_{\mathit{mod},m+3}(x-1,p_1) \wedge \Phi_{\mathit{mod},m+3}(x-1,p_2))\big).
\end{multline*}

In order to complete our hardness proof for $\phsigma_i^\EXP$ for a
subsequently fixed $i>0$ via its characterisation in
Lemma~\ref{lem:wexp-hierarchy-characterisation} and $\Phi_{L,w}$
in~(\ref{eqn:reduction-even}), we will now define the remaining
$\Pi_1$-formula $\Psi_{M_w}(x_1,\ldots,x_i)$ for a given $w\in
\{0,1\}^n$. Let $\mathcal{C}_w$ be the Boolean circuit succinctly
encoded by a Boolean circuit $\mathcal{D}_w(t,y,z_1,z_2)$ deciding
$M_w$ on an input of length $2^{q(n)}i$ such that $\mathcal{C}_w$
consists of $2^{r(n)}$ gates for some polynomial $r:\mathbb{N}\to
\mathbb{N}$. Recall that we can view an assignment of truth values to
the gates of the succinctly encoded circuit $\mathcal{C}_w$ as a bit
string, or sequence of bit strings, of appropriate length. In the
following let $\vec{a}=(a_1,\ldots,a_i)\in \N^{i}$ be a valuation, for
any $1\le j< i$ each $a_j$ will be used to encode the values of the
input gates with index $2^{q(n)}(j-1)$ up to $2^{q(n)}j-1$ of
$\mathcal{C}_w$, and $a_i$ will encode the values of the gates with
index $2^{q(n)}(i-1)$ up to the gate with index $2^{r(n)}-1$ of
$\mathcal{C}_w$. So in particular the internal gates of
$\mathcal{C}_w$ are encoded in $a_i$.

In order to extract encodings of bit strings from natural numbers, as
a first step we provide neutral formulas $\Phi_{\mathcal{C}_w,0}(x,y)$
and $\bar{\Phi}_{\mathcal{C}_w,0}(x,y)$ which assume $x$ to be a valid
encoding. These formulas enable us to test whether a bit of a bit
string whose index is given by $y$ is encoded to be zero in $x$.
Formally, for a valid encoding $a\in \N$ and for $b\in [2^{r(n)}]$, we
have $\Phi_{\mathcal{C}_w,0}(a,b)$ iff there is a prime $p\in
[(b+1)^3,(b+2)^3)$ and $a\equiv 0\bmod p$, or $a\equiv 0 \bmod p$ for
  all primes $p\in [(b+1)^3,(b+2)^3)$, respectively\footnote{In order
      to properly handle the case $b=0$, we have to shift the interval
      we use for the encoding by one from $[b^3,(b+1)^3)$ to
        $[(b+1)^3,(b+2)^3)$.}. Let $r'(x)=r(x)+1$, we define:
\begin{align*}
  \Phi_{\mathcal{C}_w,0}(x,y) & \defeq \exists
  p.\Phi_{prime,r'(n)}(p,y+1) \wedge \Phi_{\mathit{mod},r'(n)}(x,p)\\ 
  \bar{\Phi}_{\mathcal{C}_w,0}(x,y) & \defeq \forall
  p.\Phi_{prime,r'(n)}(p,y+1) \rightarrow \Phi_{\mathit{mod},r'(n)}(x,p).
\end{align*}
The formulas $\Phi_{\mathcal{C}_w,1}(x,y)$ and
$\bar{\Phi}_{\mathcal{C}_w,1}(x,y)$ testing whether the bit with index
$y$ is set to 1 in the encoding $x$ can be defined analogously by
negating $\Phi_{\mathcal{C}_w,0}(x,y)$. The previously constructed
formulas now enable us to define formulas that allow for evaluating
the succinctly encoded $\mathcal{C}_w$ on an input that is provided
on-the-fly via $\vec{a}$. Given an index $b$ implicitly less than
$2^{r(n)}$ of a gate of $\mathcal{C}_w$, represented by the
first-order variable $y$, and a vector of valid encodings
$\vec{a}=(a_1,\ldots,a_i)$ represented by the first-order variables
$\vec{x}$, the following formula
$\Phi_{\mathcal{C}_w,\top}(\vec{x},y)$ checks whether the value of the
gate with index $b$ is set to true under the valuation $\vec{a}$
according to the convention described before:
\begin{multline*}
  \Phi_{\mathcal{C}_w,\top}(\vec{x},y) \defeq \bigwedge_{1\le j< i}\Big(\big(2^{q(n)}(j-1) \le y
  < 2^{q(n)}j \rightarrow\\\rightarrow \Phi_{\mathcal{C}_w,1}(x_j,y)\big)
  \wedge \big(2^{q(n)}(i-1) \le y \rightarrow \Phi_{\mathcal{C}_w,1}(x_i,y)\big)\Big).
\end{multline*}
A formula $\Phi_{\mathcal{C}_w,\bot}(\vec{x},y)$ testing whether the
value of a gate is set to false can be defined analogously by negating
$\Phi_{\mathcal{C}_w,\top}(\vec{x},y)$. Building upon those formulas,
we can now construct Boolean connectives that allow for checking that
the gates of $\mathcal{C}_w$ are consistently encoded. Given
$\vec{a}\in \mathbb{N}^i$ as above,
$\Phi_{\mathcal{C}_w,\&}(\vec{a},b,c_1,c_2)$ holds if the logical
and-connective holds for the truth values of the gates with index $b,
c_1$ and $c_2$ encoded via $\vec{a}$:
\begin{multline*}
  \Phi_{\mathcal{C}_w,\&}(\vec{x},y,z_1,z_2) =\\
  \big(\Phi_{\mathcal{C}_w,\top}(\vec{x},y)\leftrightarrow
  \Phi_{\mathcal{C}_w,\top}(\vec{x},z_1) \wedge
  \Phi_{\mathcal{C}_w,\top}(\vec{x},z_2)\big).
\end{multline*}
The remaining Boolean connectives found in
Definition~\ref{def:boolean-circuit} can be reflected via the
additional formulas
\begin{align*}
  \Phi_{\mathcal{C}_w,\|}(\vec{x},y,z_1,z_2),~\Phi_{\mathcal{C}_w,\sim}(\vec{x},y,z_1,z_2)
  \text{ and } \Phi_{\mathcal{C}_w,\downarrow_1}(\vec{x},y,z_1,z_2)
\end{align*}
which are defined analogously to
$\Phi_{\mathcal{C}_w,\&}(\vec{x},y,z_1,z_2)$.  These formulas now
enable us to define a $\Pi_1$-analogue to
$\Phi_{\mathcal{C},\mathit{gates}}(\vec{x},\vec{y})$ in
(\ref{eqn:pa-circuit-formula}) in order to check if the Boolean
assignment of the succinctly encoded circuit $\mathcal{C}_w$ is
consistent:
\begin{multline*}
  \Psi_{\mathcal{C}_w,\mathit{gates}}(\vec{x}) = 
  \forall t.\forall y.\forall z_1.\forall z_2.
  \Phi_{\mathcal{D}_w}(t,y,z_1,z_2) \rightarrow \\ \rightarrow\bigwedge
  \left\{
  \begin{array}{l}
    t = 0 \rightarrow \Phi_{\mathcal{C}_w,\&}(\vec{x},y,z_1,z_2)\\
    t = 1 \rightarrow \Phi_{\mathcal{C}_w,\|}(\vec{x},y,z_1,z_2)\\
    t = 2 \rightarrow \Phi_{\mathcal{C}_w,\top}(\vec{x},y) \leftrightarrow 
    \Phi_{\mathcal{C}_w,\bot}(\vec{x},z_1)\\
    t = 4 \rightarrow \Phi_{\mathcal{C}_w,\top} (\vec{x}).
  \end{array}
  \right.
\end{multline*}
Here, $\Phi_{\mathcal{D}_w}(t,y,z_1,z_2)$ is an instantiation of
$\Phi_{\mathcal{C}}(x)$ defined in (\ref{eqn:pa-circuit-eval-exist})
and (\ref{eqn:pa-circuit-eval-forall}) for the circuit
$\mathcal{D}_w$. The use of $\Phi_{\mathcal{D}_w}(t,y,z_1,z_2)$ is not
totally clean as $\Phi_{\mathcal{C}}(x)$ it is only open in
$x$. However, this can easily be fixed by concatenating $t, y, z_1$
and $z_2$ into a single first-order variable as it was done in
(\ref{eqn:prime-number-in-interval-exist}) and
(\ref{eqn:prime-number-in-interval-forall}), and details have only
been omitted for the sake of readability. Also note that the values of
$t, y, z_1$ and $z_2$ are then implicitly bounded through
$\Phi_{\mathcal{D}_w}(t,y,z_1,z_2)$.

Finally, we can define $\Psi_{M_w}(x_1,\ldots,x_i)$, the last
remaining formula from~(\ref{eqn:reduction-even}), as follows
\begin{multline*}
  \Psi_{M_w}(x_1,\ldots,x_i) \defeq \\ 
  \left\{
  \begin{array}{ll}
    \Psi_{\mathcal{C}_w,\mathit{gates}}(\vec{x}) \wedge
    \Phi_{\mathcal{C}_w,\top}(\vec{x}, 2^{r(n)}-1) & \text{ if } i \text{ is odd}\\
    \Psi_{\mathcal{C}_w,\mathit{gates}}(\vec{x}) \rightarrow
    \Phi_{\mathcal{C}_w,\top}(\vec{x}, 2^{r(n)}-1) & \text{ if } i \text{ is even}
  \end{array}
  \right.
\end{multline*}
Inspecting the construction outlined in this section, it is not
difficult to see that for a given $w\in \{0,1\}^n$ the construction of
$\Phi_{L_w}(x_1,\ldots,x_i)$ is tedious, but can be performed in
polynomial time with respect to $n$. We leave it as an open problem
whether this reduction can actually be performed in logarithmic space,
though there do not seem to be any major obstacles. Following the
argumentation of this section, we conclude that $\Phi_{L,w}$ is the
formula required in Proposition~\ref{prop:pa-hardness}.

\subsection{Upper Bounds}
\label{sec:upper-bound}

We will now show that the previously obtained lower bounds have
corresponding upper bounds. Let us first recall an improved version of
a result by Reddy \& Loveland~\cite{RL78} established by
Weispfenning~\cite[Thm.~2.2]{Wei90}, which bounds the solution
intervals of Presburger formulas.
\begin{proposition}[Weispfenning~\cite{Wei90}]
  \label{prop:weispfenning-bounds}
  There exists a constant $c>0$ such that for any PA($i$,$j$) formula
  $\Phi$ and $N=\{0,\ldots 2^{c|\Phi|^{(3j)^i}}\}$, $\Phi$ is valid
  iff $\Phi$ is valid when restricting the first-order variables of
  $\Phi$ to be interpreted over elements from $N$.
\end{proposition}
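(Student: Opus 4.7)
The plan is to prove the bound by induction on the quantifier alternation depth $i$, maintaining simultaneously by duality a small-model statement for both $\Sigma_i$- and $\Pi_i$-formulas. Concretely, I would show: for any PA($i$,$j$) formula $\Phi(\vec y)$ (possibly with free variables), $\Phi$ is valid over $\N$ iff $\Phi$ is valid when every quantified variable is restricted to $N_i=\{0,\ldots,2^{c|\Phi|^{(3j)^i}}\}$. The $\Pi_i$-case follows from the $\Sigma_i$-case by negating: a $\Pi_i$-formula $\forall\vec x.\Psi(\vec x)$ fails over $\N$ iff its De Morgan dual, a $\Sigma_i$-formula, has a witness, and the induction hypothesis places that witness in $N_i$.

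For the base case $i=1$, consider $\exists\vec x.\phi(\vec x)$ with $\phi$ quantifier-free and $|\vec x|\le j$. Putting $\phi$ in disjunctive normal form and applying Proposition~\ref{lem:pottier-bound} to each disjunctive system $A\vec x\ge \vec c$ shows that feasibility implies a solution of norm at most $(n\norm{A}+\norm{\vec c}+2)^{m+n}$. Since the number of atoms $m$ and the norms $\norm{A},\norm{\vec c}$ are bounded by $|\Phi|$ while $n\le j$, this yields a bound of the form $2^{c|\Phi|^{O(j)}}$, dominated by $2^{c|\Phi|^{3j}}$ for a suitable constant $c$.

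For the inductive step, given $\Phi=\exists\vec x.\Psi(\vec x)$ with $\Psi$ a PA($i$,$j$)-formula, the induction hypothesis says that for each candidate witness $\vec b$ the truth of $\Psi(\vec b)$ is already determined by the inner quantifiers ranging over $N_i$. This effectively replaces $\Psi(\vec b)$ by a Boolean combination of linear constraints on $\vec b$ whose coefficients have norm at most $\norm{\Psi}\cdot\norm{N_i}$ and whose number of atoms is polynomial in $|\Psi|\cdot\log\norm{N_i}$. Applying Pottier's bound once more to each disjunct of the DNF of this quantifier-free formula produces a small witness $\vec b$, and a direct calculation shows the resulting exponent is dominated by $2^{c|\Phi|^{(3j)\cdot(3j)^i}}=2^{c|\Phi|^{(3j)^{i+1}}}$ after absorbing constants into $c$.

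The main obstacle is the arithmetic bookkeeping in the inductive step: each unfolding of an inner quantifier layer into a Boolean combination inflates both the atom count and the coefficient norm, and Pottier's bound combines these multiplicatively in the exponent. One has to verify that the blowup per layer is precisely a factor of $3j$ in the exponent, rather than anything larger, in order to land on $(3j)^i$ instead of a genuine tower. A secondary subtlety is that the unfolding must be uniform in the outer variables $\vec b$: strictly speaking, one must track the induction hypothesis parametrically, so that the equivalence between truth over $\N$ and truth over $N_i$ persists when $\vec b$ is treated as a symbolic parameter of bounded magnitude; this is why the statement is formulated for formulas with free variables rather than only for sentences.
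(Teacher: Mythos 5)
This proposition is quoted from Weispfenning~\cite{Wei90} and the paper does not prove it, so the review below is on the merits of your sketch alone. Your base case ($i=1$) via disjunctive normal form and Proposition~\ref{lem:pottier-bound} is sound in spirit (modulo working out the exponent arithmetic, including the fact that $\norm{A}$ may be as large as $2^{|\Phi|}$ under binary encoding), but the inductive step contains a genuine gap. You claim the induction hypothesis ``effectively replaces $\Psi(\vec b)$ by a Boolean combination of linear constraints on $\vec b$ whose \dots number of atoms is polynomial in $|\Psi|\cdot\log\norm{N_i}$.'' Bounding the ranges of the inner quantifiers does not produce a quantifier-free formula; it produces a formula with \emph{bounded} quantifiers. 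Unfolding those by enumeration yields a Boolean combination with on the order of $|N_i|^{j(i-1)}$ atoms, i.e.\ doubly exponential in $|\Phi|$, and since Pottier's bound has the number $m$ of rows in its \emph{exponent}, feeding that atom count back in yields a triply-exponential bound, destroying the recurrence. Extracting a \emph{small} quantifier-free equivalent is precisely genuine quantifier elimination, which is the whole content of the theorem you are trying to prove.

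Worse, the uniformity you flag as a ``secondary subtlety'' is in fact false in the form your induction needs. You require an $N_i$ depending only on $|\Psi|$ such that, for every $\vec b\in\N^j$, $\Psi(\vec b)$ over $\N$ agrees with $\Psi(\vec b)$ with inner quantifiers restricted to $N_i$. Take $\Psi(x_1)=\forall x_2.(x_2<x_1)$: over $\N$ this is false for every $b_1$ (witness $x_2=b_1$), yet restricting $x_2$ to any fixed finite $N_i$ makes it true for all $b_1>\norm{N_i}$. Dually, $\Psi(x_1)=\exists x_2.(x_2>x_1)$ is always true over $\N$ but false over $N_i$ once $b_1\ge\max N_i$. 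So no such uniform $N_i$ exists, and both directions of your inductive step break. Weispfenning's actual proof (refining Reddy \& Loveland~\cite{RL78}) performs a Presburger-style quantifier elimination and tracks how coefficients, moduli, and atom counts grow across the $i$ elimination stages; the $(3j)^i$ exponent emerges from that analysis. The truth of the jointly-bounded statement is a consequence of that argument, not a lemma one can assume parametrically going into it.
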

Together with Lemma~\ref{lem:wexp-hierarchy-characterisation}, this
immediately gives that for any fixed $i>0$, validity in PA($i+1$) is
in $\phsigma_{i+1}^{\EXP}$. We now show how to decrease the number of
oracle calls by one.

To this end, let $\Phi$ be a PA($i+1$,$j$) formula in prenex normal
form for a fixed $i>0$ and some $j$, \ie,
\begin{align*}
  \Phi=\exists \vec{x}_1.\forall \vec{x}_2\cdots
  Q_{i+1}\vec{x}_{i+1}.\varphi(\vec{x}_1,\ldots,\vec{x}_{i+1}). 
\end{align*}
In order to decide validity of $\Phi$, by application of
Proposition~\ref{prop:weispfenning-bounds}, a
$\phsigma_i^{\EXP}$-algorithm can alternatingly guess valuations
$\vec{a}_1,\ldots \vec{a}_i\in \N^j$ for the $\vec{x}_1,\ldots
\vec{x}_i$ such that $\norm{\vec{a}_k}\le
2^{c\abs{\Phi}^{(3j)^{i+1}}}$ for all $1\le k\le i$ and some constant
$c>0$, and by additionally padding valuations with leading zeros, we
may assume that any number in every $\vec{a}_k$ is represented using
$2^{\poly(\abs{\Phi})}$ bits. Consequently, it remains to show that
validity of $Q_{i+1} \vec{x}_{i+1}.\varphi(\vec{a}_1/\vec{x}_1,\ldots,
\vec{a}_i/\vec{x}_i, \vec{x}_{i+1})$ can be decided in polynomial
time. This is, of course, not the case under standard assumptions from
complexity theory. However, the final call to the
$\phsigma_0^\P$-oracle of a $\phsigma_i^{\EXP}$ algorithm gets
$\varphi$ and all $\vec{a}_k$ as input, the latter being of
\emph{exponential size} in $\abs{\Phi}$. Informally speaking, this
provides us with sufficient additional time in order to decide
validity of
\begin{align}
  \label{eqn:final-call}
  Q_{i+1}\vec{x}_{i+1}.\varphi(\vec{a}_1/\vec{x}_1,\ldots,
  \vec{a}_i/\vec{x}_i, \vec{x}_{i+1})
\end{align}
in polynomial time with respect to the size of the input.

\begin{algorithm}[tb]
\begin{algorithmic}[1]
\STATE $\varphi(\vec{x}_1,\ldots,\vec{x}_{i+1}) := $DNF$(\varphi(\vec{x}_1,\ldots,\vec{x}_{i+1}))$
\FORALL{clauses $\psi(\vec{x}_1,\ldots,\vec{x}_{i+1})$ of $\varphi$}
\FORALL{literals $t=p(\vec{x}_1,\ldots,\vec{x}_{i+1})< b$ of $\psi$}
\STATE replace $t$ in $\psi$ with 
$-p(\vec{x}_1,\ldots,\vec{x}_{i+1}) \ge -b+1$
\ENDFOR
\FORALL{literals $t=\neg(p(\vec{x}_1,\ldots,\vec{x}_{i+1})<b)$}
\STATE replace $t$ in $\psi(\vec{x}_1,\ldots,\vec{x}_{i+1})$ with 
$p(\vec{x}_1,\ldots,\vec{x}_{i+1})\ge b$
\ENDFOR
\STATE $(S : A\vec{x}_{i+1} \ge \vec{c}) := \psi[\vec{a}_1/\vec{x}_1,\ldots,\vec{a}_i/\vec{x}_i]$
\IF {$\eval{S}\neq \emptyset$}
\RETURN true
\ENDIF 
\ENDFOR
\RETURN false
\end{algorithmic}
\caption{Deciding $\exists
  \vec{x}_{i+1}.\varphi(\vec{x}_1,\ldots,\vec{x}_{i+1})$ for a given
  instantiation $\vec{a}_1,\ldots \vec{a}_i\in \N^j$ of
  $\vec{x}_1,\ldots \vec{x}_i$.}
\label{alg:satisfiability}
\end{algorithm}

Algorithm~\ref{alg:satisfiability}, which takes $\varphi$ and the
$\vec{a}_k$ as input, is a pseudo algorithm deciding validity of a
formula as in (\ref{eqn:final-call}) for even $i$, \ie,
$Q_{i+1}=\exists$. The case $Q_{i+1}=\forall$ can be derived
symmetrically. Let us discuss Algorithm~\ref{alg:satisfiability} and
analyse its running time. In Line~1, the algorithm converts $\varphi$
into disjunctive normal form.  This step can be performed in
exponential time $\DTIME(2^{O(\abs{\Phi})})$ and thus takes polynomial
time with respect to the input. Starting in Line~2, the algorithm
iterates over all clauses $\psi$ of $\varphi$, and since there are at
most $2^{O(\abs{\Phi})}$ clauses this iteration is performed at most a
polynomial number of times with respect to the size of the input. In
each iteration, in Lines~3--8 the algorithm transforms the disjuncts
of $\psi$ into linear inequalities by eliminating negation. After
Line~8, $\psi$ is a conjunction of linear inequalities and thus gives
rise to an equivalent system of linear Diophantine inequalities $S$ in
which the first-order variables $\vec{x}_1,\ldots \vec{x}_i$ are
instantiated by the $\vec{a}_1,\ldots \vec{a}_i$. Clearly, Lines~3--9
can be executed in polynomial time with respect to the size of the
input. Finally, in Line~10 feasibility of $S$ is checked. To this end,
we invoke Proposition~\ref{prop:frank-tardos-ip} from which it follows
that feasibility of each $S$ can be decided in
$\DTIME(2^{p(\abs{\Phi})}\abs{S})$ for some polynomial $p$. This step
is again polynomial with respect to the input to the oracle call. If
$S$ is feasible the algorithm returns true in Line~11. Otherwise, if
no $S$ is feasible for all disjuncts of $\varphi$, the algorithm
returns false in Line~14. Consequently, we have shown the following
proposition, which together with Proposition~\ref{prop:pa-hardness}
completes the proof of Theorem~\ref{thm:main}.

\begin{proposition}
  \label{prop:pa-upper-bound}
  For any fixed $i>0$, PA($i+1$) is decidable in $\phsigma_i^{\EXP}$.
\end{proposition}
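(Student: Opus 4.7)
The plan is to simulate a $\Sigma_{i+1}$-formula by an alternating machine that uses $i$ levels of the polynomial-time hierarchy on top of an $\NEXP$/$\coNEXP$ base, and to exhibit a deterministic polynomial-time procedure that decides the formula after the first $i$ quantifier blocks have been instantiated. I would first put $\Phi$ in prenex normal form $Q_1 \vec{x}_1 \cdots Q_{i+1} \vec{x}_{i+1}.\varphi$ and appeal to Proposition~\ref{prop:weispfenning-bounds} of Weispfenning so that every quantifier can be read as ranging over $[0, 2^{c|\Phi|^{(3j)^{i+1}}}]$. Since $i$ and hence the exponent tower are fixed, each variable can be represented by $2^{\poly(|\Phi|)}$ bits, and a whole quantifier block by a string of that length.

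Next I would implement the outer $i$ alternations using the $\phsigma_i^{\EXP}$ definition: the $\NEXP$ (resp.\ $\coNEXP$) layer has enough time to write down $\vec{a}_1$, the $\phsigma_1^\P$ oracle layer handles $\vec{a}_2$, and so on alternating $\NP^{\cdots}/\coNP^{\cdots}$ oracle layers for $\vec{a}_3,\ldots,\vec{a}_i$. After $i$ alternations we have reduced the problem to deciding $Q_{i+1}\vec{x}_{i+1}.\varphi(\vec{a}_1,\ldots,\vec{a}_i,\vec{x}_{i+1})$, and this must be answered by a deterministic polynomial-time routine whose input now has length $2^{\poly(|\Phi|)}$ (because the $\vec{a}_k$ are written explicitly). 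So the budget of the final oracle call is $2^{\poly(|\Phi|)}$ time in terms of $|\Phi|$.

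Assume by symmetry that $Q_{i+1}=\exists$. The deterministic routine first rewrites the quantifier-free matrix $\varphi$ in disjunctive normal form. Since $|\varphi|\le |\Phi|$, the DNF has at most $2^{|\Phi|}$ clauses, each of size polynomial in $|\Phi|$; this is polynomial in the input length $2^{\poly(|\Phi|)}$. Each clause, with the $\vec{a}_k$ substituted in and each literal $\neg(p<b)$ rewritten as $p\ge b$ and each $p<b$ rewritten as $-p\ge -b+1$, becomes a system of linear Diophantine inequalities $S:A\vec{x}_{i+1}\ge \vec{c}$ over $n\le |\Phi|$ variables with $\norm{A},\norm{\vec{c}}\le 2^{\poly(|\Phi|)}$. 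By Proposition~\ref{prop:frank-tardos-ip}, feasibility of $S$ can be checked in $n^{2.5n+o(n)}|S|=2^{O(|\Phi|\log|\Phi|)}\cdot |S|$ arithmetic operations, again polynomial in the oracle-input size. The formula is valid iff one of these clauses is feasible, yielding the desired deterministic polynomial-time decision procedure.

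The subtle point, and the only real obstacle, is the accounting: ``polynomial time'' at the bottom of a $\phsigma_i^{\EXP}$ computation is polynomial in the length of the oracle query, which after the exponentially large witnesses $\vec{a}_1,\ldots,\vec{a}_i$ have been handed down is already $2^{\poly(|\Phi|)}$. This is precisely what makes DNF expansion and Frank--Tardos affordable, and what lets us save the extra level of oracle that a naive combination of Weispfenning with Lemma~\ref{lem:wexp-hierarchy-characterisation} would cost. Combined with Proposition~\ref{prop:pa-hardness}, this yields Theorem~\ref{thm:main}.
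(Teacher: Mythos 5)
Your proposal follows essentially the same route as the paper: bound the witnesses with Weispfenning's solution-interval result, guess them over the $i$ alternations of a $\phsigma_i^{\EXP}$ computation, and observe that the final deterministic oracle call receives an exponentially long input, which buys enough time to expand the matrix into DNF and check feasibility of each clause via Frank--Tardos. The only slip is the claim $\norm{\vec{c}}\le 2^{\poly(|\Phi|)}$: after substituting the $\vec{a}_k$ the entries of $\vec{c}$ can have \emph{value} up to $2^{2^{\poly(|\Phi|)}}$; what is $2^{\poly(|\Phi|)}$ is their bit length, which is what actually matters for the size of $S$ and for the running time, so the argument goes through unchanged.
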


\subsection{Discussion}
We conclude this part of the paper with a short discussion on the
relationship of our proof of the lower bound of PA(${i+1}$) to the
proof of a $\NEXP$ lower bound for PA($2$) by Gr\"adel~\cite{Grae89},
and applications of and results derivable from
Proposition~\ref{prop:pa-upper-bound}.

As it emerged in Section~\ref{sec:lower-bound}, at many places we can
apply and reuse ideas of Gr\"adel's $\NEXP$-hardness proof for PA($2$)
given in~\cite{Grae89} for our lower bound. One main difference is
that for his hardness proof, Gr\"adel reduces from a $\NEXP$-complete
tiling problem that he specifically introduces in order to show
hardness for PA(2). In our paper, we are in the lucky position of
having access to twenty-five additional years of developments in
computational complexity, in which it turned out that succinct
encodings via Boolean circuits provide a canonical way in order to
show hardness results for complexity classes that include $\EXP$, see
\eg~\cite{PY86,Pap94,GLV95}. Moreover, the discovery of a
polynomial-time algorithm for deciding primality~\cite{AKS02} also
enables us to use Boolean circuits encoded into $\Sigma_1$-
respectively $\Pi_i$-formulas in order to decide primality of a
positive integer of a bounded bit size, while in~\cite{Grae89} this is
achieved by an application of the Lucas primality criterion,
\emph{cf.}~ Lehmer's more general proof~\cite{Leh27}. In addition,
Gr\"adel's stronger statement that PA($2$) is $\NEXP$-hard already for
an $\exists \forall^*$-quantifier prefix can be recovered from our
lower bound. Even more generally for $i>1$, we can derive
$\phsigma_i^{\EXP}$-hardness from our construction for a
$(\exists\forall)^{((i-1)/2)}\exists^*\forall^*$ quantifier prefix if
$i$ is odd, and for a
$\exists(\forall\exists)^{(i/2-1)}\forall^*\exists^*$ quantifier
prefix if $i$ is even. Even though essentially all technical results
required to prove Theorem~\ref{thm:main} were available
when~\cite{Grae89} was published, as we have seen in this section the
proof of the lower bound requires some substantial technical efforts,
which is probably a reason why this result has not been obtained
earlier.

With regards to applications of Proposition~\ref{prop:pa-upper-bound},
we give an example of a result which can be obtained as a corollary of
this proposition. In~\cite{Huy85}, Huynh investigates the complexity
of the inclusion problem for context-free commutative grammars. Given
context-free grammars $G_1,G_2$, this problem is to determine whether
the Parikh image\footnote{The Parikh image of a word $w\in \Sigma^*$
  is a vector of naturals of dimension $|\Sigma|$ counting the number
  of times each alphabet symbol occurs in $w$.} of the language
defined by $G_1$ is included in the Parikh image of the language
defined by $G_2$. Building upon a careful analysis of the semi-linear
sets obtained from Parikh images of context-free grammars due to
Ginsburg~\cite{Gin66} and by establishing a Carath{\'e}odory-type
theorem for integer cones, Huynh shows that the complement of this
problem is in $\NEXP$. This result can however now easily be obtained
as a corollary of Proposition~\ref{prop:pa-upper-bound}: Verma
\emph{et al.} have shown that the Parikh image of a context-free
grammar can be defined in terms of a $\Sigma_1$-formula of Presburger
arithmetic linear in the size of the
grammar~\cite{VSS05}. Non-inclusion then reduces to checking validity
of a $\Sigma_2$-sentence, which yields the following corollary.
\begin{corollary}
  \label{cor:ccfg-inclusion}
  Non-inclusion between Parikh images of context-free grammars is in
  $\NEXP$.
\end{corollary}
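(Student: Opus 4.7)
The plan is to reduce non-inclusion to validity of a $\Sigma_2$-sentence of Presburger arithmetic of size polynomial (in fact linear) in $|G_1|+|G_2|$, and then invoke Proposition~\ref{prop:pa-upper-bound} with $i=1$. First I would apply the result of Verma \emph{et al.}~\cite{VSS05} cited in the preceding paragraph to construct, in polynomial time, two $\Sigma_1$-formulas $\Phi_1(\vec{x})$ and $\Phi_2(\vec{x})$, linear in $|G_1|$ and $|G_2|$ respectively, such that $\eval{\Phi_j(\vec{x})}$ is exactly the Parikh image of $L(G_j)$ for $j=1,2$; here $\vec{x}$ ranges over $\N^{|\Sigma|}$, where $\Sigma$ is the common terminal alphabet.

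Next, the non-inclusion assertion is equivalent to the sentence
\begin{align*}
  \Phi \defeq \exists \vec{x}.\,(\Phi_1(\vec{x}) \wedge \neg \Phi_2(\vec{x})).
\end{align*}
Since $\Phi_1$ is $\Sigma_1$ and $\neg\Phi_2$ is $\Pi_1$, pulling quantifiers out yields a prenex $\Sigma_2$-sentence whose size is linear in $|G_1|+|G_2|$; that is, $\Phi$ is a PA($2$) formula.

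Then I would invoke Proposition~\ref{prop:pa-upper-bound} with $i=1$ to conclude that validity of $\Phi$ can be decided in $\phsigma_1^{\EXP} = \NEXP^{\phsigma_0^{\P}} = \NEXP^{\P} = \NEXP$, where the last equality holds because $\NEXP$ can absorb a polynomial-time oracle within its own exponential time budget. Since $|\Phi|$ is linear in the input size $|G_1|+|G_2|$, this places non-inclusion in $\NEXP$, as required.

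There is essentially no conceptual obstacle here: the only point deserving a moment's care is verifying that the $\Sigma_1$-formulas from~\cite{VSS05} have polynomial (here linear) size in the grammars, so that the reduction preserves membership in $\NEXP$; and that $\phsigma_1^{\EXP}$ collapses to $\NEXP$, which follows from the oracle being at level $\phsigma_0^{\P}=\P$ and from $\NEXP$ being closed under polynomial-time Turing reductions. With these observations in place, the corollary follows as a one-line application of Proposition~\ref{prop:pa-upper-bound}.
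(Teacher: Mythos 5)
Your proof is correct and follows essentially the same route as the paper: invoke Verma \emph{et al.}~\cite{VSS05} to express each Parikh image as a linear-size $\Sigma_1$-formula, observe that non-inclusion is a $\Sigma_2$-sentence $\exists\vec{x}.(\Phi_1(\vec{x})\wedge\neg\Phi_2(\vec{x}))$, and apply Proposition~\ref{prop:pa-upper-bound} with $i=1$ to land in $\phsigma_1^{\EXP}=\NEXP$. The only difference is that you spell out the quantifier bookkeeping and the collapse $\NEXP^{\P}=\NEXP$, which the paper leaves implicit.
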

Of course, the ``hard work'' of the upper bound is done in
Proposition~\ref{prop:frank-tardos-ip}, but nevertheless we are able
to obtain a succinct proof of Huynh's result. In general, the $\NEXP$
upper bound for PA(2) provides a generic upper bound for non-inclusion
problems that can be reduced to checking inclusion between semi-linear
sets definable via PA(1) formulas. For context-free commutative
grammars, it should however be noted that it is not known whether this
upper bound is tight, the best known lower bound being
$\phsigma_2^\P$~\cite{Huy85}.

\section{Ultimately-Periodic Sets Definable in the $\Sigma_1$-fragment
  of Presburger Arithmetic}
\label{sec:derived-results}

We will now apply some techniques developed in
Section~\ref{sec:complexity} in order to prove
Theorem~\ref{thm:qfpa-projections}, \emph{i.e.}, give bounds on the
representation of projections of PA($1$) formulas open in one variable
as ultimately-periodic sets. Formally, given a PA($1$) formula
$\Phi(x)$, we are interested in the representation of the set
\begin{align*}
  \eval{\Phi(x)}=\{ a\in \N : \Phi(a/x) \text{ is valid}\}. 
\end{align*}
Subsequently, we show that this set is an ultimately periodic set
whose period is at most doubly-exponential and that this bound is
tight. Throughout this section we assume binary encoding of numbers in
$\Phi(\vec{x})$

We begin with the first part of Theorem~\ref{thm:qfpa-projections} and
prove the following proposition.
\begin{proposition}
  There exists a family of $\Sigma_1$-formulas of Presburger
  arithmetic $(\Phi_n(x))_{n>0}$ such that each $\Phi_n(x)$ is a
  {PA($1$,$O(n)$)} formula with $\abs{\Phi_n(x)}\in O(n^2)$ and
  $\eval{\Phi_n(x)}$ is an ultimately periodic set with period $p_n\in
  2^{2^{\Omega(n)}}$.
\end{proposition}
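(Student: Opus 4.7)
The plan is to construct $\Phi_n(x)$ as a short $\Sigma_1$-formula asserting that $x$ has a nontrivial divisor smaller than $2^n$, then to identify $\eval{\Phi_n(x)}$ with a set whose minimum period is the primorial of $2^n$. I would define
\begin{align*}
  \Phi_n(x) \defeq \exists y.\, y > 1 \wedge \Phi_{\mathit{mod},n}(x,y),
\end{align*}
where $\Phi_{\mathit{mod},n}(x,y)$ is the divisibility formula from~(\ref{eqn:qfpa-mod}), which forces $y \in [2,2^n)$ and $y\mid x$. Inspecting its definition, $\Phi_{\mathit{mod},n}$ uses $O(n)$ existentially quantified variables and has size $O(n^2)$ under binary encoding (dominated by writing the coefficients $2^i$ for $i<n$); adding the outer quantifier preserves these bounds, so $\Phi_n(x)$ is a {PA($1$,$O(n)$)} formula of size $O(n^2)$.

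Next I would identify the semantics by setting $P_n \defeq \mathrm{lcm}(2,3,\ldots,2^n-1)$ and $S \defeq \{a \in \N : \gcd(a,P_n)>1\}$, and observing that $\eval{\Phi_n(x)}=S$, since $a$ has a divisor in $[2,2^n)$ iff $a$ shares a prime factor with $P_n$. The set $S$ is clearly ultimately periodic with period $P_n$, but the \emph{minimum} period turns out to be considerably smaller. I would let $Q_n \defeq \prod_{p<2^n,\,p\text{ prime}} p$ denote the primorial of $2^n-1$; then $Q_n$ is already a period of $S$, since for every prime $p\mid Q_n$ we have $p\mid a \iff p\mid a+Q_n$, preserving membership in $S$.

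For the matching lower bound I would argue that every period of $S$ must be divisible by $Q_n$. Suppose $Q$ is a period: since $1\notin S$, it follows that $1+kQ\notin S$ for all $k\ge 0$, i.e., no prime $p<2^n$ divides any of $1,1+Q,1+2Q,\ldots$. Fixing a prime $p<2^n$, if $p\nmid Q$ then $Q$ is a unit modulo $p$, so $\{kQ \bmod p : k\ge 0\}$ covers every residue modulo $p$ including $-1$, contradicting $p\nmid 1+kQ$ for all $k$. Hence $p\mid Q$ for every prime $p<2^n$, giving $Q_n\mid Q$. Combined with $Q_n$ itself being a period, the minimum period of $\eval{\Phi_n(x)}$ equals $Q_n$. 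Finally, by Chebyshev's theorem ($\theta(x)=\sum_{p\le x}\ln p=\Theta(x)$), we have $\log Q_n = \theta(2^n-1) = \Theta(2^n)$, so $Q_n \in 2^{2^{\Omega(n)}}$ as required.

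The main subtle step is the minimum-period lower bound: a priori one only knows that the period divides $P_n$, which is itself doubly-exponential, but showing the minimum period is doubly-exponential (and not collapsed to something small by algebraic coincidences among the $1+kQ$) requires the Chinese-Remainder-style argument above that forces every small prime to divide the period.
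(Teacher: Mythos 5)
Your proof follows the paper's construction (same formula $\Phi_n(x)\defeq\exists y.\,\Phi_{\mathit{mod},n}(x,y)\wedge y>1$, same size analysis), but it is actually more careful than the paper on the crucial point. The paper writes $\eval{\Phi_n(x)}=U(0,p,\emptyset,R)$ with $p=\lcm\{1,\ldots,2^n-1\}$ and invokes Nair's bound $\lcm\{1,\ldots,N\}\ge 2^N$; it merely \emph{exhibits} $p$ as a period and does not argue that no smaller period exists, which is what a genuine lower bound requires. You correctly observe that the minimum period of $\{a:\gcd(a,P_n)>1\}$ is in fact the squarefree primorial $Q_n=\prod_{p<2^n}p$ (already for $n=3$ the minimum period is $210$ while $\lcm\{1,\ldots,7\}=420$), and you supply the missing minimality argument: any period $Q$ must be divisible by every prime $p<2^n$, since otherwise $Q$ is a unit mod $p$ and the progression would eventually hit a multiple of $p$. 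You then use Chebyshev's estimate $\theta(x)=\Theta(x)$ in place of Nair's theorem, which is the right tool for the primorial. This is a genuinely more complete argument, not merely a different one.

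One small imprecision: you write ``since $1\notin S$, it follows that $1+kQ\notin S$ for all $k\ge 0$.'' Because $S$ is only \emph{ultimately} periodic, with some threshold $t$, membership of $1$ says nothing about the progression until it exceeds $t$. The fix is easy and standard: pick an anchor $a^*\ge t$ with $a^*\notin S$ (for instance any $a^*\equiv -1\pmod{Q_n}$, which avoids all primes below $2^n$), and run the same argument on the progression $a^*,a^*+Q,a^*+2Q,\ldots$, all of which lie in the periodic regime. With that adjustment the argument is airtight.
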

To this end, we combine $\Phi_{\mathit{mod},n}(x,y)$ from
(\ref{eqn:qfpa-mod}) in Section~\ref{sec:lower-bound} with the
following statement.
\begin{proposition}[Nair~\cite{Nai82}]
  \label{prop:nair}
  Let $n\ge 9$, then $2^n\le \lcm\{1,\ldots n\}\le 2^{2n}$.
\end{proposition}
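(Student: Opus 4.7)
The statement is a two-sided Chebyshev-type estimate for
$\psi(n)\defeq \log\lcm\{1,\ldots,n\}=\sum_{p^{k}\le n}\log p$: the claim reads
$n\log 2\le \psi(n)\le 2n\log 2$ for $n\ge 9$. My plan is to prove the two inequalities separately using standard tools of elementary prime-number theory, with finitely many small cases ($n=9$ being the smallest) verified by direct computation, since the whole point of the hypothesis $n\ge 9$ is to absorb the slack between the asymptotic arguments below and the small-$n$ behaviour.

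For the upper bound $\lcm\{1,\ldots,n\}\le 2^{2n}$, I would start from the identity
\[
\lcm\{1,\ldots,n\}=\prod_{p\le n} p^{\lfloor\log_p n\rfloor},
\qquad\text{so}\qquad
\psi(n)=\sum_{k\ge 1}\theta(n^{1/k}),
\]
where $\theta(x)\defeq\sum_{p\le x}\log p$. The key ingredient is Erd\H{o}s's bound $\theta(n)\le n\log 4$, proved by strong induction on $n$: the inductive step uses that every prime $p$ with $m+1<p\le 2m+1$ divides $\binom{2m+1}{m}$ and that $\binom{2m+1}{m}\le \tfrac{1}{2}\cdot 2^{2m+1}=4^{m}$ (by symmetry, two equal central binomial coefficients fit inside $\sum_k\binom{2m+1}{k}=2^{2m+1}$). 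Plugging this into the decomposition above gives $\psi(n)\le n\log 4+O(\sqrt{n}\log n)$, and a routine numerical check shows the $O$-term is dominated for $n\ge 9$.

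For the lower bound $\lcm\{1,\ldots,n\}\ge 2^{n}$ I would use Nair's own integral argument. The combinatorial heart is the divisibility
\[
m\binom{n}{m}\,\Bigm|\,\lcm\{1,\ldots,n\},
\qquad\text{for every }1\le m\le n,
\]
which follows from Legendre's formula via the identity $m\binom{n}{m}=n\binom{n-1}{m-1}$, after splitting on the $p$-adic valuation of $n$ (so that a carry in Kummer's theorem exactly cancels the valuation of $n$). Coupled with the beta-integral
\[
\int_{0}^{1}x^{m-1}(1-x)^{n-m}\,dx=\frac{1}{m\binom{n}{m}},
\]
this shows that $\lcm\{1,\ldots,n\}\cdot\int_{0}^{1}P(x)\,dx$ is an integer for every $P\in\mathbb{Z}[x]$ of degree at most $n-1$. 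Choosing $P(x)=\bigl(x(1-x)\bigr)^{\lfloor(n-1)/2\rfloor}$ (or, to get the sharp constant, a suitable integer combination that produces $1/\binom{2m}{m}$-type values), the right-hand integral is positive but so small that the integer $\lcm\{1,\ldots,n\}\cdot\int_{0}^{1}P$ must itself be $\ge 1$, yielding $\lcm\{1,\ldots,n\}\ge 1/\int_{0}^{1}P(x)\,dx$. Stirling's bound on the beta value, or the elementary estimate $\int_{0}^{1}(x(1-x))^{k}dx\le 4^{-k}$, then gives $\lcm\{1,\ldots,n\}\ge c\cdot 4^{n/2}/\sqrt{n}$, which exceeds $2^{n}$ for $n\ge 9$.

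The main obstacle is the divisibility claim $m\binom{n}{m}\mid\lcm\{1,\ldots,n\}$ underlying the lower bound; a naive count $v_{p}(m)+v_{p}\binom{n}{m}\le 2\lfloor\log_{p}n\rfloor$ is too weak, and one genuinely has to exploit the cancellation in $m\binom{n}{m}=n\binom{n-1}{m-1}$ together with the carry interpretation of $v_{p}\binom{n-1}{m-1}$. Once this lemma is in place, the rest is bookkeeping: the upper bound is standard Chebyshev, and the lower bound follows by a single well-chosen polynomial evaluation, with the explicit threshold $n\ge 9$ tightening the asymptotics into clean $2^{n}$ and $2^{2n}$ bounds.
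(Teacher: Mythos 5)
First, note that the paper itself gives no proof of this proposition: it is imported verbatim from Nair's note, so you are being compared against the cited source rather than an argument in the text. Your lower bound is essentially Nair's own integral method, which is exactly what the citation refers to, but two points need repair. The divisibility $m\binom{n}{m}\mid\lcm\{1,\ldots,n\}$, which you single out as the main obstacle requiring a Legendre/Kummer carry analysis, is in fact immediate and needs no $p$-adic work: for any $P\in\mathbb{Z}[x]$ of degree at most $n-1$, $\int_0^1 P(x)\,dx$ is a $\mathbb{Z}$-linear combination of $1/(j+1)$ with $j+1\le n$, so $\lcm\{1,\ldots,n\}\int_0^1 P(x)\,dx\in\mathbb{Z}$ already for monomials, and taking $P(x)=x^{m-1}(1-x)^{n-m}$ *yields* the divisibility rather than requiring it — you have inverted the logic of Nair's proof. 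More importantly, your final estimate is garbled: $c\cdot 4^{n/2}/\sqrt{n}=c\cdot 2^{n}/\sqrt{n}$ does **not** exceed $2^n$. What the choice $P(x)=(x(1-x))^k$, $k=\lfloor (n-1)/2\rfloor$, actually gives is $\lcm\{1,\ldots,n\}\ge (2k+1)\binom{2k}{k}\approx c\sqrt{k}\,4^{k}$, i.e.\ a gain of $\sqrt{k}$, not a loss; this settles odd $n\ge 9$, but for even $n$ the bound $\sqrt{k}\,4^{k}$ with $4^{k}=2^{n-2}$ only beats $2^n$ for fairly large $n$, and one needs Nair's extra twist (from $m\binom{2m}{m}\mid\lcm\{1,\ldots,2m\}$ and $\gcd(m,2m+1)=1$ one gets $m(2m+1)\binom{2m}{m}\mid\lcm\{1,\ldots,2m+1\}$, hence $\lcm\ge m4^m$) or a finite computation over a larger range than you advertise.

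The upper bound, however, has a genuine gap. Your plan is $\psi(n)=\sum_{k\ge1}\theta(n^{1/k})\le n\log 4+O(\sqrt{n}\log n)$ using Erd\H{o}s's bound $\theta(n)\le n\log 4$, followed by "a routine numerical check". But the target $\lcm\{1,\ldots,n\}\le 2^{2n}$ is exactly $\psi(n)\le n\log 4$: the main term of your bound already equals the target, so there is no slack whatsoever to absorb the strictly positive contribution of the prime powers, and no numerical check for $n\ge 9$ can fix an inequality of the form $n\log 4+(\text{positive})\le n\log 4$. To make this route work you need a bound on $\theta$ (or $\psi$) with a constant strictly below $\log 4$, e.g.\ Chebyshev's refined argument giving $\psi(n)\le cn+O(\log^2 n)$ with $c\approx 1.106<\log 4$, plus a finite verification; note that even the telescoped bound $\psi(x)\le\psi(x/2)+x\log 2+O(\log x)$ again only yields $x\log 4+O(\log^2 x)$ and so suffers from the same defect. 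As written, the upper-bound half of your proof does not establish the proposition.
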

We define
\begin{align*}
  \Phi_n(x) & \defeq \exists y.\Phi_{\mathit{mod},n}(x,y) \wedge y > 1.
\end{align*}
We have $\abs{\Phi_{n}(x)}\in \bigo(n^2)$, and, since numbers are
encoded in binary, that $\Phi_n(x)$ is a PA($1$,$O(n)$) formula. Now
$a\in \eval{\Phi(x)}$ iff there is $1< m<2^n$ such that $a \equiv 0
\bmod m$, and consequently
\begin{align*}
  \eval{\Phi_{n}(x)} &= \bigcup_{1<m<2^n} U(0,m,\emptyset,\{0\})\\
  &= U(0,p,\emptyset,\{ a : a\in [p], m|a, 1<m<2^n \}),
\end{align*}
where $p\defeq \lcm\{1,\ldots 2^n-1\}$. By
Proposition~\ref{prop:nair}, $p\in 2^{2^{\Omega(n)}}$, which yields
the lower bound for Theorem~\ref{thm:qfpa-projections}.

Turning now towards the upper bound, the remainder of this section is
devoted to proving the second part of
Theorem~\ref{thm:qfpa-projections}, \ie, the following statement.

\begin{proposition}
  For any $\Sigma_1$-formula $\Phi(x)$,
  we have $\eval{\Phi(x)}=U(t,p,B,R)$ such that $t\in
  2^{\poly(\abs{\Phi(x)})}$ and $p\in 2^{2^{\poly(\abs{\Phi(x)})}}$.
\end{proposition}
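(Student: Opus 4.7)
The plan is to follow the pipeline $\Sigma_1$-formula $\rightsquigarrow$ DNF $\rightsquigarrow$ systems of linear Diophantine inequalities $\rightsquigarrow$ semi-linear sets $\rightsquigarrow$ projected ultimately periodic sets, tracking norm and cardinality bounds at every step, and finally take a union. Put $\Phi(x)=\exists\vec{y}.\varphi(x,\vec{y})$ and convert $\varphi$ into DNF; this produces at most $2^{O(\abs{\Phi})}$ clauses, each a conjunction of literals. After eliminating negations on atoms (as in Algorithm~\ref{alg:satisfiability}), each clause becomes a system $S_j:A_j(x,\vec{y})\ge \vec{c}_j$ whose matrix has dimension $m\times n$ with $m,n\le\abs{\Phi}$ and whose entries are bounded by $\norm{\Phi}\le 2^{\abs{\Phi}}$.

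Next, invoke Proposition~\ref{lem:pottier-bound}: each $\eval{S_j}$ is a finite union $\bigcup_{\ell} L(\vec{b}_{j,\ell};P_{j,\ell})$ with $\norm{\vec{b}_{j,\ell}},\norm{P_{j,\ell}}\le (n\norm{A_j}+\norm{\vec{c}_j}+2)^{m+n}\le 2^{\poly(\abs{\Phi})}$. Since the generators have singly exponential norm and live in $\N^n$ with $n\le\abs{\Phi}$, the number of such linear sets inside a single $\eval{S_j}$ is also bounded by $2^{\poly(\abs{\Phi})}$, so in total the projection $\eval{\Phi(x)}$ is a union of at most $N=2^{\poly(\abs{\Phi})}$ sets of the form $\pi_x(L(\vec{b};P))$ with $\norm{\vec{b}},\norm{P}\le M=2^{\poly(\abs{\Phi})}$.

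Now analyse a single such projection. Writing $b=b_x$ and $P_x=\{q_1,\ldots,q_k\}\subseteq[0,M]$ for the $x$-coordinates of $\vec{b}$ and of the period vectors in $P$, the projection equals $\{b+\lambda_1 q_1+\cdots+\lambda_k q_k:\lambda_i\in\N\}$. Discarding zero generators, let $d=\gcd(q_1,\ldots,q_k)$; scaling by $d$ reduces to a coprime numerical semigroup, so by a standard Frobenius/Brauer bound every multiple of $d$ above $(\max q_i)^2\le M^2$ belongs to the semigroup. Consequently $\pi_x(L(\vec{b};P))$ is ultimately periodic with period $d\le M$ and threshold at most $b+M^2\le 2^{\poly(\abs{\Phi})}$. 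This is the only step requiring a nontrivial number-theoretic input, and it is the main obstacle: one must check the Frobenius-style bound carefully so that the threshold stays singly exponential rather than leaking into a doubly exponential one.

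Finally, take the union over the $N\le 2^{\poly(\abs{\Phi})}$ projections. An ultimately periodic set $U(t_i,p_i,B_i,R_i)$ can be rewritten with any common multiple of its period; hence the union is of the form $U(t,p,B,R)$ with $t=\max_i t_i\le 2^{\poly(\abs{\Phi})}$ and $p=\lcm(p_1,\ldots,p_N)$. Since every $p_i\le M=2^{\poly(\abs{\Phi})}$, the period divides $\lcm\{1,\ldots,M\}$, which by Proposition~\ref{prop:nair} is bounded by $2^{2M}=2^{2^{\poly(\abs{\Phi})}}$, independently of $N$. The set $B\subseteq[t]$ and $R\subseteq[p]$ are then uniquely determined and the statement follows.
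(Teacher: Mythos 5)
Your proposal is correct and follows essentially the same route as the paper: DNF, translation of each clause into a system of linear Diophantine inequalities, Pottier's bound on the semi-linear representation, a Frobenius/Wilf-type bound to turn each one-dimensional projection into an ultimately periodic set with singly-exponential threshold and period, and finally Nair's $\lcm$ estimate to bound the common period doubly-exponentially. The extra cardinality bound on the number of linear sets is harmless but, as you observe yourself, unnecessary for the final estimates.
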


As a first step, we consider projections of sets of solutions of
systems of linear Diophantine inequalities. To this end, let
$S:A\vec{x}\ge \vec{c}$ be such a system. From
Proposition~\ref{lem:pottier-bound}, we have that
$\eval{S}=\bigcup_{i\in I} L(\vec{b}_i;P_i)$ for some index set
$I$. Let $M_i$ be the projection of $L(\vec{b}_i;P_i)$ on the first
component. We get that $M_i$ can be obtained as
\begin{align*}
  M_i & = \{ b_i + \lambda_{i,1} p_1 + \cdots \lambda_{r}p_{i,r_i} : 
  \lambda_k\in \N\}\\
  & = b_i + g_i \cdot \{ \lambda_1 p_{i,1}/g_i + \cdots
  \lambda_{r} p_{i,r_i}/g_i : \lambda_k \in \N\}
\end{align*}
for some $b_i$, $p_{i,1}<\cdots < p_{i,r_i}$ and $g_i = \gcd\{
p_{i,1},\ldots p_{i,r_i}\}$. Since $\gcd\{ p_{i,1}/g_i, \ldots
p_{i,r_i}/g_i \}=1$, it is folklore that
\begin{align*}
  M_i = b_i + g_i \cdot U(t_i'+1,1,B_i',\{0\}) 
\end{align*}
for some $B_i'\subseteq [t_i']$ and $t_i'\in \N$ known as the Frobenius
number of $p_{i,1}/g_i,\ldots p_{i,r_i}/g_i$. Given co-prime positive
integers $1<a_1<a_2<\cdots < a_k\in \N$, the \emph{Frobenius number
  $f\in \N$} is the largest positive integer not expressible as a
positive linear combination of $a_1,\ldots a_k$ and can be bounded as
follows.
\begin{proposition}[Wilf~\cite{Wilf78}]
  \label{prop:frobenius-bounds}
  Let $1<a_1<a_2<\cdots<a_k\in \N$ be pairwise co-prime. Then the
  Frobenius number $f$ is bounded by $f\le a_k^2$.
\end{proposition}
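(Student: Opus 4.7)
The plan is to reduce the $k$-generator Frobenius problem to the classical two-generator case. Since $a_1$ and $a_k$ are coprime by the pairwise-coprimality hypothesis, any $n \in \N$ that is representable as a non-negative integer combination of $a_1$ and $a_k$ alone is automatically representable as a non-negative combination of $a_1, \ldots, a_k$ (use coefficient $0$ on the middle generators). Hence the Frobenius number $f$ of $\{a_1, \ldots, a_k\}$ is at most the Frobenius number $f(a_1, a_k)$ of the pair. Granting the classical Sylvester--Frobenius identity $f(a_1, a_k) = a_1 a_k - a_1 - a_k$, one then obtains $f \le a_1 a_k - a_1 - a_k < a_1 a_k \le a_k^2$ since $a_1 < a_k$, which is the desired bound.

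Thus the real content to establish is the identity $f(a, b) = ab - a - b$ for coprime $1 < a < b$. I would first verify the upper bound: every $n \ge (a-1)(b-1) = ab - a - b + 1$ is non-negatively representable. Because $\gcd(a,b) = 1$, there exists $(x_0, y_0) \in \Z^2$ with $x_0 a + y_0 b = n$, and the full solution set is $\{(x_0 + tb, y_0 - ta) : t \in \Z\}$. Picking $t$ so that $y := y_0 - ta$ lies in $[0, a-1]$ yields $xa = n - yb \ge (a-1)(b-1) - (a-1)b = -(a-1)$, which forces the integer $x$ to be non-negative.

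For the matching lower bound, I would check that $ab - a - b$ is itself not representable. Suppose for contradiction that $ab - a - b = xa + yb$ with $x, y \ge 0$. Rearranging yields $a(b - 1 - x) = b(y + 1)$, and since $\gcd(a, b) = 1$, $a$ must divide $y + 1$, forcing $y + 1 \ge a$; by symmetry $x + 1 \ge b$. Substituting back gives $xa + yb \ge (b-1)a + (a-1)b = 2ab - a - b$, strictly larger than $ab - a - b$, a contradiction. Together with the previous step this pins down $f(a,b) = ab - a - b$ exactly.

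The main obstacle is mild: it is the choice of parameter $t$ in the upper-bound step, where one has to verify carefully that restricting $y$ to $[0, a-1]$ suffices to force $x \ge 0$. Once that calibration is done, the chain $f \le f(a_1, a_k) = a_1 a_k - a_1 - a_k < a_k^2$ immediately delivers the stated bound, and in fact a slight strengthening to $f < a_1 a_k$ is available for free.
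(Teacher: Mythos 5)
Your proposal is correct, but it is genuinely different from what the paper does: the paper offers no proof at all for this proposition, simply importing the bound from Wilf's paper on the money-changing problem, whereas you give a self-contained elementary argument. Your route reduces the $k$-generator problem to the two-generator case via the observation that representability by $\{a_1,a_k\}$ alone implies representability by the full set, and then establishes the Sylvester--Frobenius identity $f(a_1,a_k)=a_1a_k-a_1-a_k$ by the standard residue-shifting argument for the upper bound and the divisibility contradiction for non-representability of $a_1a_k-a_1-a_k$; both steps check out (your calibration $y\in[0,a-1]$ indeed forces $xa\ge -(a-1)$, hence $x\ge 0$). Two small remarks. First, your argument implicitly needs $k\ge 2$ so that the pair $(a_1,a_k)$ exists; for $k=1$ the Frobenius number is not defined anyway, so this matches the intended reading of the statement. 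Second, you only use coprimality of the single pair $(a_1,a_k)$ rather than full pairwise coprimality, and you obtain the sharper bound $f\le a_1a_k-a_1-a_k<a_k^2$; this is a strictly weaker hypothesis and a stronger conclusion than the cited statement, which is harmless here (and worth noting, since Wilf-type bounds are usually stated under the still weaker hypothesis $\gcd(a_1,\ldots,a_k)=1$, which is what the paper's later application actually provides). What the citation buys the paper is brevity and the more general gcd-one form; what your proof buys is independence from the reference and an explicit, elementary derivation of the constant.
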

Hence, for some $t_i \le b_i + g_i(p_{i,r_i}/g_i)^2 \le b_i +
p_{i,r_i}^2$ we consequently have
\begin{align}
  \label{eqn:projection-ups}
  M_i = U(t_i,p_i,B_i,\{0\})
\end{align}
for some $p_i\le p_{i,r_i}$.

Let $\Phi(x)=\exists \vec{x}.\varphi(x,\vec{x})$ be a PA($1$)
formula. From Algorithm~\ref{alg:satisfiability} we can derive that
\begin{align*}
  \eval{\varphi(x,\vec{x})} & = \bigcup_{j\in J} \eval{S_j} = \bigcup_{i\in I}
  L(\vec{b}_i;P_i),
\end{align*}
where each $S_j: A_j(x,\vec{x}) \ge \vec{c}_j$ is a system of linear
Diophantine inequalities obtained from one disjunct of the disjunctive
normal form of $\varphi$, similar as in Line~9 of
Algorithm~\ref{alg:satisfiability}. Clearly,
$\norm{A_j},\norm{\vec{c}_j}\le \norm{\Phi(x)}+1$ for all $i\in
I$. Moreover, from Proposition~\ref{lem:pottier-bound} we derive that
$\eval{S_j}=\bigcup_{i\in I_j} L(\vec{b}_i;P_i)$ for some index set
$I_j$ such that for every $i\in I_j$
\begin{align*}
  \norm{\vec{b}_i}, \norm{P_i} \le (\abs{\Phi(x)}\norm{A_i} +
  \norm{\vec{c}_i} + 1)^{\bigo(\abs{\Phi(x)})} \in
  2^{\poly(\abs{\Phi(x)})}.
\end{align*}

Let $M_i$ be as above, from~(\ref{eqn:projection-ups}) we have
$M_i=U(t_i,p_i,B_i,\{0\})$. Now define $p \defeq \lcm\{ p_i \}_{i\in
  I}$, combining the estimations in~(\ref{eqn:projection-ups}) with
Proposition~\ref{prop:nair} we have $p \in
2^{2^{\poly(\abs{\Phi(x)})}}$. It follows that $\eval{\Phi(x)} =
U(t,p,B,R)$ for some $t\in 2^{\poly(\abs{\Phi(x)})}$ and $p\in
2^{2^{\poly(\abs{\Phi(x)})}}$ as above, which concludes the proof of
Theorem~\ref{thm:qfpa-projections}.

\section{Conclusion}
\label{sec:conclusion}

In the first part of this paper we have shown that Presburger
arithmetic with a fixed number of $i+1$ quantifier alternations and an
arbitrary number of variables in each quantifier block is complete for
$\phsigma_{i}^{\EXP}$ for every $i>0$. This result closes a gap that
has been left open in the literature, and in particular improves and
generalises results obtained by F\"urer~\cite{F82},
Gr\"adel~\cite{Grae89} and Reddy \& Loveland~\cite{RL78}. Moreover, it
provides an interesting natural problem which is complete for the weak
EXP hierarchy, a complexity class for which not that many natural
complete problems have been known so far.

In the second part, we established bounds on ultimately periodic sets
definable in the $\Sigma_1$-fragment of Presburger arithmetic and
showed that in particular the period of those sets is at most
doubly-exponential and that this bound is tight. As already discussed
in the introduction, there are however natural ultimately periodic
sets definable in this fragment that admit periods that are at most
singly-exponential, \emph{cf.}~\cite{GHOW12}. An interesting open
question is whether it is possible to identify a fragment of
$\Sigma_1$-Presburger arithmetic for which such a singly-exponential
upper bound can be established in general and that captures sets such
as those considered in~\cite{GHOW12}.

\section*{Acknowledgments} 
The author would like to thank the anonymous referees for their
thoughtful comments on the first version of this paper. In addition,
the author is grateful to Benedikt Bollig, Stefan G\"oller, Felix
Klaedtke, Sylvain Schmitz and Helmut Veith for encouraging discussions
and helpful suggestions.

\bibliographystyle{plain} \bibliography{bibliography}

\newpage
\begin{appendix}
  \section{Missing proofs}
  In the following, let $\mathbb{B}=\{0,1\}$. Let us recall the
  following characterisation of the polynomial-time hierarchy.
  \begin{lemma}[Def.\ 5.3 and Thm.\ 5.12 in \cite{AB09}]
    \label{lem:p-hierarchy-characterisation}
    For $i>0$, a language $L\subseteq \mathbb{B}^*$ is in
    $\phsigma^\P_i$ iff there exists a polynomial $r$ and a
    deterministic polynomial-time computable predicate $S\subseteq
    (\mathbb{B}^*)^{i+1}$ such that $w \in L$ iff
    \begin{footnotesize}
      \begin{align*}
        \exists w_1\in \mathbb{B}^{r(n)}.\forall
        w_2\in \mathbb{B}^{r(n)}\cdots Q_i w_i\in
        \mathbb{B}^{r(n)}.S(w,w_1,\ldots, w_i).
      \end{align*}
    \end{footnotesize}
  \end{lemma}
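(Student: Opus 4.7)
The plan is a standard induction on $i$. The base case $i=1$ reduces to the textbook definition of $\NP$ via polynomial-time verifiers: $L\in\NP$ iff there exist a polynomial $r$ and a polytime predicate $S$ with $w\in L \Leftrightarrow \exists w_1\in\mathbb{B}^{r(|w|)}.\,S(w,w_1)$. For the inductive step I would argue both inclusions, assuming the characterisation for $i-1$.

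For the \emph{if} direction, suppose $L$ admits the representation $w\in L \Leftrightarrow \exists w_1\,\forall w_2\cdots Q_i w_i.\,S(w,w_1,\ldots,w_i)$ with $S$ polytime. Introduce the auxiliary language
\begin{align*}
L' \defeq \{(w,w_1) : \forall w_2\,\exists w_3 \cdots Q_i w_i.\,S(w,w_1,\ldots,w_i)\}.
\end{align*}
Pushing a negation across the prefix, $\overline{L'}$ has an $i-1$-alternation $\Sigma$-style representation with a polytime predicate, so by the inductive hypothesis $\overline{L'}\in\phsigma^\P_{i-1}$ and thus $L'\in\phpi^\P_{i-1}$. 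An $\NP$ machine on input $w$ guesses $w_1$ and queries $(w,w_1)\in L'$, yielding $L\in\NP^{L'}\subseteq\NP^{\phpi^\P_{i-1}}=\NP^{\phsigma^\P_{i-1}}=\phsigma^\P_i$, where I use that $\NP^A=\NP^{\overline{A}}$ since the oracle answer can be negated in the $\NP$ machine.

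For the \emph{only if} direction, suppose $L\in\phsigma^\P_i=\NP^A$ for some $A\in\phsigma^\P_{i-1}$ via a polytime nondeterministic oracle machine $M$. I would describe an accepting computation of $M$ on input $w$ by its nondeterministic choice bits $b$ together with a vector of purported oracle answers $(a_1,\ldots,a_k)$ to the $k\le\poly(|w|)$ queries $q_1,\ldots,q_k$ that $M$ would issue (the $q_j$ being determined in polytime from $w,b,a_1,\ldots,a_{j-1}$). Then $w\in L$ iff there exist $b,a_1,\ldots,a_k$ such that the computation with these answers accepts (a polytime condition) and each $a_j$ is the \emph{correct} answer. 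The correctness requirement is a conjunction of polynomially many sub-conditions of the form ``$q_j\in A$'' (when $a_j=1$) and ``$q_j\notin A$'' (when $a_j=0$); by the inductive hypothesis the former unfolds as $\exists u^{(j)}_1\forall u^{(j)}_2\cdots$ and the latter as $\forall u^{(j)}_1\exists u^{(j)}_2\cdots$, each of depth $i-1$ over poly-length witnesses and with polytime body.

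The bulk of the work, and the main obstacle, is to pull all these quantifiers to the front to obtain a single $\exists\forall\cdots Q_i$ prefix of depth exactly $i$. The plan is to merge by alternation level: the outer $\exists b,a_1,\ldots,a_k$ together with the first-level $\exists u^{(j)}_1$ blocks coming from the ``yes'' sub-conditions collapse into the single outer $\exists w_1$; at the next level, the first-level $\forall u^{(j)}_1$ blocks from the ``no'' sub-conditions merge with the second-level $\forall u^{(j)}_2$ blocks from the ``yes'' sub-conditions into $\forall w_2$; and so on in strict interleaving. Sub-formulas of depth less than $i-1$ are padded with dummy quantifiers so that all prefixes align. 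Since there are only polynomially many sub-conditions and each contributes polynomial-length witnesses by the inductive hypothesis, every merged block $w_t$ remains of polynomial length, and the combined body is a Boolean combination of polynomially many polytime predicates, hence polytime. The delicate point is the accounting: verifying that the interleaving never produces a formula of depth $i+1$, and that ``yes'' and ``no'' witnesses are routed into the correct blocks so that the prefix really starts with $\exists$ and alternates exactly $i-1$ times. Once this bookkeeping is set up carefully, everything else is routine.
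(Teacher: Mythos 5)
The paper does not prove this lemma at all: it is cited directly from Arora \& Barak (Def.\ 5.3 and Thm.\ 5.12 of~\cite{AB09}) as an imported fact, and only the analogous characterisation of the weak EXP hierarchy (Lemma~\ref{lem:wexp-hierarchy-characterisation}) is actually proved in the paper, using this one as a black box. So there is no in-paper proof to compare against; your argument is the standard textbook proof of the quantifier characterisation of $\PH$, and it is essentially correct.

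A small remark on the point you flag as delicate. The interleaving in the ``only if'' direction is actually cleaner than you suggest. Each ``yes'' sub-condition $q_j\in A$ unfolds by the inductive hypothesis into a prefix $\exists\forall\exists\cdots$ of depth $i-1$ over fresh variables, and each ``no'' sub-condition $q_j\notin A$ into a prefix $\forall\exists\forall\cdots$ of depth $i-1$, also over fresh variables. Assigning the ``yes'' prefixes to levels $1,\ldots,i-1$ (merging their leading $\exists$ with the outer $\exists$ that guesses $b$ and the $a_j$'s) and the ``no'' prefixes to levels $2,\ldots,i$ automatically makes all quantifier types agree level by level: at level $\ell\ge 2$ the $(\ell)$-th ``yes'' block and the $(\ell-1)$-th ``no'' block have the same quantifier, by parity. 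Since all these sub-formulas appear in a conjunction over pairwise disjoint variable sets (the free parameters $q_j$ are polytime functions of $w,b,a$), the prenexing $\bigwedge_j Q_1 u^j_1\cdots Q_{i-1}u^j_{i-1}.\psi_j \equiv Q_1(\ldots,u^j_1,\ldots)\cdots Q_{i-1}(\ldots,u^j_{i-1},\ldots).\bigwedge_j\psi_j$ is exact, not an approximation, and the depth is exactly $i$. The only genuinely necessary bookkeeping is padding shorter blocks so that all witness strings have the same polynomial length $r(n)$, as the statement requires a single polynomial. With that, your proof is complete.
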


  \begin{lemma}[Lem.~\ref{lem:wexp-hierarchy-characterisation} in the main text]
    For any $i>0$, a language $L\subseteq \{0,1\}^*$ is in
    $\phsigma^{\EXP}_i$ iff there exists a polynomial $q$ and a
    predicate $R\subseteq (\{0,1\}^*)^{i+1}$ such that for any $w\in
    \{0,1\}^n$,
    \begin{footnotesize}
    \begin{multline*}
      w\in L \text{ iff } \exists w_1\in \{0,1\}^{2^{q(n)}}.\forall w_2\in
      \{0,1\}^{2^{q(n)}}\cdots\\\cdots Q_i w_i\in \{0,1\}^{2^{q(n)}}.R(w,w_1,\ldots, w_i)
    \end{multline*}
    \end{footnotesize}and $R(w,w_1,\ldots,w_i)$ can be decided in 
    deterministic polynomial time.
  \end{lemma}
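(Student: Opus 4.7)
The plan is to derive the lemma directly from its polynomial-time analogue, Lemma~\ref{lem:p-hierarchy-characterisation}, by exploiting the definition $\phsigma_i^{\EXP} = \NEXP^{\phsigma_{i-1}^\P}$. The outer $\exists w_1$ in the desired characterisation will correspond to the NEXP nondeterministic guess, and the remaining $i-1$ alternations will correspond to the $\phsigma_{i-1}^\P$ oracle call, which itself admits an $(i-1)$-alternation polynomial characterisation by Lemma~\ref{lem:p-hierarchy-characterisation}. The case $i = 1$ reduces to the familiar certificate characterisation of $\NEXP$, using that $\phsigma_0^\P = \P$ and hence $\phsigma_1^{\EXP} = \NEXP^\P = \NEXP$.

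For the ``if'' direction with $i > 1$, suppose $w \in L$ iff $\exists w_1 \forall w_2 \cdots Q_i w_i.\, R(w, w_1, \ldots, w_i)$ with $\abs{w_j} = 2^{q(n)}$ and $R$ polynomial-time. Consider the auxiliary language $L' = \{(w, w_1) : \forall w_2 \cdots Q_i w_i.\, R(w, w_1, \ldots, w_i)\}$. On inputs $(w, w_1)$ of length $N = n + 2^{q(n)}$, each $w_j$ has length $2^{q(n)} \leq N$, so it is polynomial in $N$, and the matrix $R$ is polynomial-time in $N$. Applying Lemma~\ref{lem:p-hierarchy-characterisation} in its $\forall$-first form characterising $\phpi_{i-1}^\P$ then yields $L' \in \phpi_{i-1}^\P$. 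An NEXP machine now decides $L$ by guessing $w_1 \in \{0,1\}^{2^{q(n)}}$ and querying the oracle $L'$, placing $L$ in $\NEXP^{\phpi_{i-1}^\P} = \NEXP^{\phsigma_{i-1}^\P} = \phsigma_i^{\EXP}$ (oracle access to a class and to its complement being interchangeable).

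For the ``only if'' direction, let $L \in \NEXP^{\phsigma_{i-1}^\P}$ be accepted by an oracle NEXP machine $M^A$ running in time $T = 2^{n^k}$, with $A \in \phsigma_{i-1}^\P$ characterised by Lemma~\ref{lem:p-hierarchy-characterisation} via $q \in A$ iff $\exists u_1 \forall u_2 \cdots Q_{i-1} u_{i-1}.\, S(q, u_1, \ldots, u_{i-1})$, where $\abs{u_\ell} \leq p(\abs{q})$ and $S$ is polynomial-time. I would let $w_1$ encode the nondeterministic choices of $M^A$, the sequence of issued oracle queries $q_1, \ldots, q_s$ with $s \leq T$, their purported answers $a_1, \ldots, a_s \in \{0,1\}$, and, for every $j$ with $a_j = 1$, a guessed first-level certificate $u_1^{(j)}$ for $q_j \in A$. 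The polynomial-time matrix $R$ will verify that the simulation of $M^A$ on $w$ under these nondeterministic choices and oracle answers is accepting, and that for every $j$ the appropriate branch of the characterisation of $A$ holds. For yes-answers the residual obligation has the shape $\forall u_2^{(j)} \exists u_3^{(j)} \cdots$, the leading $\exists$ having been absorbed into $w_1$; for no-answers, negating the $\phsigma_{i-1}^\P$-characterisation of $A$ yields $\forall u_1^{(j)} \exists u_2^{(j)} \cdots$. Both prefixes begin with $\forall$, so bundling the query index $j$ into the outermost $\forall w_2$ quantifier and concatenating the $\ell$-th slot across all queries into a single string $w_{\ell+1}$ of length $2^{q'(n)}$ for a suitable polynomial $q'$ yields the required $i$-alternation form.

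The main delicate point is the bookkeeping in the second direction. Since $M^A$ may issue up to $T = 2^{n^k}$ queries, the characterisation of $A$ has to be instantiated for exponentially many indices $j$ at once, and the critical alignment is that both the yes-case (after absorbing the leading $\exists u_1^{(j)}$ into $w_1$) and the negated no-case start with a universal quantifier, so that a single shared prefix $\forall w_2 \exists w_3 \cdots$ can drive the verification of all queries simultaneously. Once this packaging is in place, each individual check is polynomial-time in $T$, hence polynomial-time in the total input length $O(2^{q'(n)})$, matching the format demanded by the lemma.
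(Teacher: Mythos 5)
Your proposal is correct and follows essentially the same route as the paper's proof: both directions reduce to Lemma~\ref{lem:p-hierarchy-characterisation}, the ``$\Leftarrow$'' direction via a $\NEXP$ machine that guesses $w_1$ and makes a single $\phsigma_{i-1}^\P$-oracle call (you define the oracle language as $\Pi_{i-1}^\P$, the paper defines its complement as $\Sigma_{i-1}^\P$, which is interchangeable since oracle access to a class and its complement coincide), and the ``$\Rightarrow$'' direction by absorbing nondeterministic choices, oracle queries, guessed answers, and first-level yes-certificates into $w_1$ and observing that the residual verification obligations for both yes- and no-queries begin with a universal quantifier and thus align under a shared $\forall w_2 \exists w_3 \cdots Q_i w_i$ prefix.
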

  \begin{proof}
    (``$\Leftarrow$'') We describe a $\NEXP^{\phsigma_{i-1}^\P}$
    Turing machine $M$ deciding for a given $w\in \mathbb{B}^n$
    whether $w\in L$. First, $M$ performs a $\NEXP$ guess in order to
    guess $w_1\in \mathbb{B}^{2^{q(n)}}$. Define $L'\subseteq
    \mathbb{B}^n\times \mathbb{B}^{2^{q(n)}}$ such that $(w,w_1)\in L'$ iff
    \begin{footnotesize}
      \begin{align*}
        \exists w_2\in \mathbb{B}^{2^{q(n)}} \cdots Q_i' w_i\in
        \mathbb{B}^{2^{q(n)}}. \neg R(w,w_1,\ldots,w_i),
      \end{align*}
    \end{footnotesize}where $Q_i'=\exists$ if $Q_i=\forall$ and \emph{vice versa}. By
    Lemma~\ref{lem:p-hierarchy-characterisation}, we have that $L'$ is
    a language in $\phsigma_i^\P$ since we can check if the input is
    sufficiently large and immediately reject if this is not the case,
    choose $r: w \mapsto \abs{w} - n$, and decide $\neg
    R(w,w_1,\ldots,w_i)$ in deterministic polynomial time. Thus, after
    $M$ has guessed $w_1$, it invokes the $\phsigma_{i-1}^\P$ oracle
    to check $(w,w_1)\in L'$ and accepts if $(w,w_1)\not\in L'$.

    (``$\Rightarrow$'') Let $L$ be decided by a
    $\NEXP^{\phsigma_{i-1}^\P}$ Turing machine $M$. Given $w\in
    \mathbb{B}^n$, an accepting run of $M$ has length at most
    $2^{n^k}$ for some $k>0$ on which it resolves $c_1,\ldots,c_m \in
    \mathbb{B}, m\le 2^{n^k}$ non-deterministic choices. Moreover, $M$
    makes $\ell$ oracle queries ``$v_j\in L'$?'' for some $L'$ in
    $\phsigma_{i-1}^\P$ such that $n_j=\abs{v_j}, \ell\le 2^{n^k}$,
    and $M$ receives answers $a_j\in \mathbb{B}$ to those queries. By
    Lemma~\ref{lem:p-hierarchy-characterisation}, we have $v_j \in L'$
    iff
    \begin{footnotesize}
      \begin{align*}
        \exists w_{2,j}\in \{0,1\}^{r(n_j)} \cdots
        Q_{i} w_{i,j}\in \{0,1\}^{r(n_j)}. S(v_j,w_{2,j},\ldots,w_{i,j}).
      \end{align*}
    \end{footnotesize}If $M$ receives $a_j=1$ as an answer to an oracle call it can
    guess the corresponding certificate $w_{2,j}\in
    \mathbb{B}^{r(n_j)}$. Otherwise, if $a_j=0$ this result
    can be verified using one quantifier alternation. Consequently, we
    can guess the answers to the oracle queries and then verify at
    once whether those guesses were correct. Hence, $w\in L$ iff
    \begin{small}
      \begin{multline*}
        \exists c_1,\ldots c_m\in \mathbb{B}, 
        v_1,\ldots v_\ell \in \mathbb{B}^{2^{n^k}},
        a_1,\ldots a_\ell\in \mathbb{B},\\
        w_{2,1},\ldots w_{2,\ell}\in \mathbb{B}^{r(2^{n^k})}.
        \forall w_2\in \mathbb{B}^{2^{q(n)}} 
        \cdots Q_{i} w_i\in \mathbb{B}^{2^{q(n)}}.\\
        R((w\cdot c_1\cdots c_m\cdot v_1 \cdots v_\ell\cdot a_1\cdots a_\ell\cdot w_{2,1}\cdots 
        w_{2,\ell}), w_2,\ldots, w_i)
      \end{multline*}
    \end{small}for some appropriately chosen polynomial $q$ and appropriately
    constructed $R\subseteq (\mathbb{B}^*)^{i+1}$ combining $S$ with
    checking that the $c_i$ resolve the non-determinism of $M$
    correctly and that the guessed answers to the oracle calls are
    correct.
  \end{proof}
\end{appendix}

\end{document}